\documentclass[10pt]{article}

\usepackage{amsmath}
\usepackage{amsthm}
\usepackage{amssymb}
\usepackage{hyperref}

\usepackage{graphicx}

\setlength{\parindent}{0.25in} \setlength{\parskip}{2mm}
\setlength{\textwidth}{6.5in} \setlength{\oddsidemargin}{0in}

\newtheorem{thm}{Theorem}

\newtheorem{lemma}[thm]{Lemma}
\newtheorem{prop}[thm]{Proposition}

\theoremstyle{remark}
\newtheorem{remark}{Remark}

\theoremstyle{definition}

\newcommand{\R}{\mathbb R}

\newcommand{\Rp}{{\mathbb R}_+}

\newcommand{\cL}{\mathcal{L}}
\newcommand{\cM}{\mathcal{M}}

\newcommand{\g}{{\gamma}}

\newcommand{\lam}{{\lambda}}           

\newcommand{\Om}{\Omega}                

\newcommand{\s}{{\sigma}}

\newcommand{\ran}{\rangle}
\newcommand{\lan}{\langle}
\newcommand{\ra}{\rightarrow}

\newcommand{\p}{{\partial}}

\newcommand{\lb}{\left(}
\newcommand{\rb}{\right)}
\newcommand{\lsb}{\left[}
\newcommand{\rsb}{\right]}

\newcommand{\const}{\operatorname{const}}

\newcommand{\ip}[2]{\left\langle #1,#2\right\rangle}

\newcommand{\Rin}{R_{i}}
\newcommand{\Xout}{X_{o}}
\newcommand{\Xin}{X_{i}}
\newcommand{\Rout}{R_{o}}
\newcommand{\Lin}[1]{\left\|#1\right\|_{i}}
\newcommand{\Lout}[1]{\left\|#1\right\|_{o}}

\newcommand{\Lab}{{\cal L}_{a}}
\newcommand{\La}{L_{a}}
\newcommand{\Fab}{F_{a}}
\newcommand{\Fabc}{F_{a b c}}
\newcommand{\Nab}{N}
\newcommand{\Lap}[1]{\Delta^{\!\!(#1)}}
\newcommand{\Labc}{{\cal L}_{abc}}
\newcommand{\epsin}{\varepsilon_i}
\newcommand{\epsout}{\varepsilon_o}
\newcommand{\Lopout}{{\cal L}_o}
\newcommand{\Vout}{V_o}
\newcommand{\Lopin}{{\cal L}_i}
\newcommand{\Vin}{V_i}
\newcommand{\Va}{V}
\newcommand{\Wa}{W_a}

\renewcommand{\O}[1]{\mathrm{O}\lb#1\rb}
\newcommand{\OX}[2]{\mathrm{O}_{#1}\lb#2\rb}
\newcommand{\smallO}[2]{\mathrm{o}_{#1}\lb#2\rb}

\newcommand{\DETAILS}[1]{}

\title{On Spectra of Linearized Operators for Keller-Segel Models of Chemotaxis\thanks{Supported by NSF grants DMS 0719895 and
DMS 0807131,  UNM RAC grant, RFBR  grant 
RFBR-07-0-2-12058, and NSERC grant 7901 }}
\author{S. I. Dejak\footnote{Department of Mathematics, University of Toronto, Toronto, Canada.}\,\,\ P.M. Lushnikov\footnote{Department of Mathematics and Statistics, University of New Mexico, USA.}\,\,\  Yu. N. Ovchinnikov\footnote{Landau Institute for Theoretical Physics, Chernogolovka, Russia and Max-Planck Institute for Physics of Complex Systems, 01187 Dresden, Germany.}\,\,\   I. M. Sigal$^\dagger$}
\date{October 12, 2011}
\newcommand{\DATUM}{September 21, 2011}              
\pagestyle{myheadings}                         
\markboth{\hfill{KSspec, \DATUM}}{{KSspec, \DATUM}\hfill}  %
\begin{document}

\maketitle

\begin{abstract}
We consider  the phenomenon of collapse in the critical  Keller-Segel equation (KS) which models chemotactic aggregation of micro-organisms  underlying many social activities, e.g. 
fruiting body development 
 and biofilm formation. Also  KS describes the collapse of a gas of self-gravitating Brownian particles.
We find the fluctuation spectrum around the 
 collapsing  family of steady states for these equations, which is instrumental in derivation of  the critical collapse law.
To this end we develop a rigorous version of the method of  matched asymptotics for the spectral analysis of a class of second order differential operators containing the linearized Keller-Segel operators
(and as we argue  linearized operators appearing in nonlinear evolution problems). 
 We explain how the
results we obtain are used to derive the critical collapse law,  as well as for proving its stability.
 \end{abstract}
\textit{Key words}: Matched asymptotics, critical Keller-Segel equation, collapse and formation of singularities, linearized operators.

\section{Introduction}

Phenomena of blowup and collapse in nonlinear evolution equations 
are hard to simulate numerically and the rigorous theory, or at least a careful analysis, is  pertinent here.
The recent years witnessed a tremendous progress in developing of such theories. 
We can now describe  the shape of blowup profile and contraction law 
in Yang-Mills, $\s-$model,  nonlinear Schr\"odinger and heat  equations (\cite{RS, RR, KST1, KST2, OS, BOS, MR, MZ, DGSW})
\footnote{Numerical simulations for these equations failed until the compression rate was derived analytically, see \cite{BOS, OS, SS}}.
Yet, after 40 years of intensive research and important progress, we still cannot give a rigorous description of collapse in the Keller-Segel equations modeling chemotaxis.
(See \cite{BrLeBu1998,Ve1, Ve2, BCC, BCL, BCM, BDEF, BDP, BeCL} for some recent works, \cite{BrCoKaScVe1999}, for a nice discussion of the subject,  
and \cite{Na2001, Ho2003, Ho2004, HP, Per} for reviews.)

This is not say that  the Keller-Segel equations are harder than  Yang-Mills, $\s-$model and nonlinear Schr\"odinger equations, they are not, but neither are they less important.
They model 
chemotaxis, which is a directed movement of organisms in response to the concentration gradient of an external chemical signal (\cite{KeSe1970}, see also \cite{Patlak1953}). 
The chemical signals can come from external sources or they can be produced by the organisms themselves.  The latter situation leads to aggregation of organisms and to the formation of patterns and is
 the case modeled by the Keller - Segel equations. 
Chemotaxis is believed to underly many social activities of micro-organisms, e.g. social motility, fruiting body development, quorum sensing and biofilm formation. A classical example
is the dynamics and the aggregation of {\it Escherichia coli} colony under the starvation condition \cite{BrLeBu1998}. 
Another example is the   {\it Dictyostelium} amoeba , where single cell bacterivores,
 when challenged by adverse conditions, form multicellular structures of $\sim 10^5$ cells \cite{Bo1967,CFTV}.  Also endothelial cells of humans react to vascular endothelial growth factor to form blood vessels
 through aggregation \cite{CarmelietNatMEd2000}.

We assume that the organism population is large and the individuals are small relative to both the domain, $\Omega\subset \R^d\ (d=1, 2, 3)$ as well as
the typical distance between the organism is much larger than their size.  One can derive in the mean-field approximation
the Keller-Segel system  governing the organism density $\rho:\Omega\times\Rp\rightarrow\Rp$ and chemical
concentration $c:\Omega\times\Rp\rightarrow\Rp$ \cite{KeSe1970,NewmanGrima2004}.
As the chemical diffuses much faster than organisms, one makes a simplifying assumption of instantaneous interaction (adiabatic assumption)
which, after rescaling and a minor simplification, leads the  Keller-Segel equations to the form
\begin{align}\label{KS}
\begin{cases}
 \p_t \rho&=\Delta\rho-\nabla\cdot(\rho\nabla c),\\
0&=\Delta c+\rho,
\end{cases}
\end{align}
with $\rho$ and $c$ satisfying the no-flux Neumann boundary conditions.  The equations \eqref{KS} 
appear also in the context of stellar collapse (see \cite{HeNeVe1997, Wo, ChavSir, SirChav});  similar equations - the Smoluchowski or nonlinear Fokker - Planck equations - models non-Newtonian
complex fluids (see \cite{Doi, Lar, CKT1, CKT2}.

Arguably, the most interesting feature of 
the  Keller-Segel equations is that they can develop, in finite time, infinite mass at a point in space. As argued below, the 'collapsing' profile and contraction law have a universal (close to self-similar)
 form, independent of particulars of initial configurations and, to a certain degree, of the equations themselves, and can be associated with chemotactic aggregation.
 Though the equations are rather crude and unlikely to produce  patterns one observes in nature or experiments, the collapse phenomenon could be useful in  verifying
 assumptions about biological mechanisms.\footnote{There are numerous refinements of the Keller - Segel equations, e.g. taking into account finite size of organisms
 (\cite{AlberChenGlimmLushnikov2006,AlberChenLushnikovNewman2007,LushnikovChenalberPRE2008}) preventing
 the complete collapse, which model the chemotaxis more precisely. We believe techniques we outline and develop here can be applied to these models as well.}

We now concentrate on the (energy) critical case of $d=2$ and $\Om=\R^2$. It was shown in \cite{Na1995, Bi1998} that solutions of \eqref{KS} with the mass
$$M:=\int_{\R^2} \rho_0\, dx>8\pi$$ blow up in finite time. Ref. \cite{Ve2} exhibited blowup solutions with explicit blowup rate and explicit asymptotics, which was confirmed in \cite{Lu,DyachenkoLushnikovVladimirovaAIPConfProc2011}
 by a different
technique relying on results of the present paper.
However, the problem of describing the dynamics of blowup, i.e. blowup rate and profile for an open set of initial conditions is still open. As is shown below, this paper makes a considerable progress toward its solution.

  Of a critical importance here are the following key properties of the equation \eqref{KS}:
\begin{itemize}
\item It is invariant under the scaling transformations
$\rho(x,t)\rightarrow\frac{1}{\lambda^2}\rho\lb\frac{1}{\lambda}x,\frac{1}{\lambda^2}t\rb\
\mbox{and}\ c(x,t)\rightarrow c\lb\frac{1}{\lambda} x,\frac{1}{\lambda^2} t\rb.$

\item
It has the static solution,
$R(x):=\frac{8}{(1+|x|^2)^2},\ C(x):=-2\ln(1+|x|^2).$ 

\item  The total 'mass' is conserved:
$\int \rho(x, t)dx = \const.$\footnote{Another important property of \eqref{KS} that it is a gradient system with the (free) energy $F(\rho)= \int_{\R^2} (
\frac{1}{2}\rho\ \Delta^{-1}\rho+\rho\ln\rho)\, dx$, which plays the key role in other papers, is not used in our approach.} 
\end{itemize}
\DETAILS{Of a critical importance here is the existence of the family of static solutions, $R_\lam(x):=\lam^{-2}R(x/\lam),\ R(x)=\frac{8}{(1+|x|^2)^2}$, of \eqref{KS}.}
The stationary solution $R(x)$ has the total mass $\int R(x) dx=8\pi$, which 
 is exactly the sharp threshold between global existence and singularity development in solutions to \eqref{KS}, mentioned above. 

 The properties above yield that \eqref{KS} has in fact the family of static solutions 
$\lam^{-2}R(x/\lambda),\  C(x/\lambda),\ \lambda>0,$\footnote{It seems this family was discovered in \cite{Ost}. It is shown in \cite{DLS} that belongs to the two parameter family
$R^{(\mu)}_{\lambda}(x):=R^{(\mu)}(r/\lam)$, where $R^{(\mu)}(x):=2(\mu-2)^2\frac{|x|^{\mu-4}
}{(1+|x|^{\mu-2})^2},\  \mu> 2.$ Our case is $\mu=4$. If $2<\mu < 4$, then the mass at the origin is non-zero, 
and if $\mu>4$, then the mass at the origin is negative and hence the static solution is not physical. For $\mu=4$, due to the sharp logarithmic Hardy-Littlewood-Sobolev inequality, these static solutions unique and minimize the free energy, $F(\rho)= \int_{\R^2} (
\frac{1}{2}\rho\ \Delta^{-1}\rho+\rho\ln\rho)\, dx$, for the fixed mass $\int\rho=\mu$ (\cite{B, CL}).  We conjecture that the same is true for $2<\mu < 4$.}
and suggest a likely scenario of collapse: sliding along this family in the direction of $\lambda \ra 0$.
Indeed, we \textit{conjecture} that, like 
in Struwe's result \cite{Str} for equivariant  wave maps from the Minkovskii  space-time, ${\mathbb{M}}^{2+1}$, to the $2$-sphere, $S^2$,  
for any  solution,  $\rho(x, t)$, of \eqref{KS}, collapsing up at time $T$, there are sequences $\lam_i\ra 0$ and $t_i\ra T$, s.t. $\rho(\lam_i y, t_i)$ converges to the stationary solution $R(y)$, as $i\ra \infty$. Thus
the most interesting and natural initial conditions for \eqref{KS} are those close to the manifold $\{R_\lam(x) | \lam>0\}$. 

This discussion brings us to the first step 
of the theory of collapse in the Keller - Segel system - determining the low-lying spectrum of fluctuations around the family $R_\lam(x)$. This would determine whether this family is stable. 
In this paper we find this spectrum and to do this we develop a rigorous version of the method of matched asymptotics.

Now, we discuss,  following \cite{DLS}, a natural approach to this problem. 
Since the blowup profile is expected to be radially symmetric, it is natural to start with radially symmetric solutions. In this case,  the system  \eqref{KS}, which consists of  coupled parabolic and elliptic PDEs, 
is equivalent to a single PDE. Indeed, the change of the unknown, by passing from the density, $\rho(x,t)$, to the normalized mass, 
\begin{equation*}
m(r,t):=\frac{1}{2\pi}\int_{|x|\le r}\rho(x,t)\ dx,
\end{equation*}
  of organisms contained in a ball of radius $r$, 
discovered by  \cite{JaLu1992, BrCoKaScVe1999}, maps two equations \eqref{KS} into a single equation  
\begin{equation}\label{m-eq}
\p_t m=\Lap{0}_r m+r^{-1} m\p_r m,
\end{equation}
on $(0,\infty)$ (with initial condition $m_0(r):=\frac{1}{2\pi}\int_{|x|\le r}\rho_0 (x)\, dx$).  Here $\Lap{n}_r$ is the $n$-dimensional radial Laplacian, $\Lap{n}_r:=r^{-(n-1)}\p_r r^{n-1}\p_r=\p_r^2+\frac{n-1}{r}\p_r$. 
Thus \eqref{KS} in the radially symmetric case is equivalent to \eqref{m-eq}.

The equation \eqref{m-eq} has the following key properties, inherited from the corresponding
properties of \eqref{KS}):
\begin{itemize}
\item It is invariant under the scaling transformations
$m(r,t)\rightarrow m\lb\frac{1}{\lambda}r,\frac{1}{\lambda^2}t\rb.$ 
\item
It has the static solution 
$ \chi(r):=\frac{4 r^2}{1+r^2}$. 

\item  The total 'mass' is conserved:
$2\pi \lim_{r\ra\infty} m(r,t)=\int \rho(x, t)dx = \const.$ 
\end{itemize}  

\DETAILS{ The stationary solution $\chi(r)$ has the total mass $2\pi \lim_{r\ra\infty} \chi(r)=\int R(x) dx=8\pi$, which turns out to be the sharp threshold between global existence and singularity development in solutions to \eqref{KS} or \eqref{m-eq} (see 
 \cite{Na1995, Bi1998}).}
As in the case of \eqref{KS},
 the properties above yield the manifold of static solutions 
${\cal M}_0:=\{\chi(r/\lambda)\ |\ \lambda>0\}$ 
and suggest a likely scenario of collapse: sliding along ${\cal M}_0$ in the direction of $\lambda \ra 0$.
To analyze the collapse, we  pass to the reference frame collapsing with the solution, by introducing the adaptive blowup variables, 
\begin{equation*}
m(r,t)=u(y,\tau), \mbox{where}\ y=\frac{r}{\lambda}\
\mbox{and}\ \tau=\int_0^t \frac{1}{\lambda^2(s)}\, ds,
\end{equation*}
where $\lambda:[0,T)\rightarrow[0,\infty),\  T>0,$ is  a positive differentiable function (\textit{compression} or \textit{dilatation} parameter), s.t. $\lambda(t)\rightarrow 0$ as $t\uparrow T$. The advantage of moving to blowup variables is that the function $u$ is expected to have bounded derivatives and the blowup time is eliminated from consideration (it is mapped to $\infty$).  Writing \eqref{m-eq} in blowup variables, we find the equation for the rescaled mass function
\begin{equation}\label{u-eq}
\p_\tau u=\Lap{0}_yu+y^{-1}u\p_y u -a y\p_yu,
\end{equation}
where $a:=-\dot{\lambda}\lambda$.

To investigate stability properties of the rescaled stationary solution $\chi(y)$, we decompose solutions $u(y,\tau)$ of equation \eqref{u-eq} into 
 the main term, $\chi(y)$, and  the fluctuation $\phi(y,\tau)$, $u(y,\tau)=\chi(y)+\phi(y,\tau).$ 
Substituting this decomposition into  \eqref{u-eq} gives the equation for the fluctuation $\phi$,
\begin{equation}\label{phi-eq}
 \p_\tau\phi=-L_{a }\phi+\Fab+\Nab(\phi),
\end{equation}
where the forcing and nonlinear terms are $\Fab :=
 -\frac{8a y^2}{(1+ y^2)^2}\ \quad \textrm{and}\ \quad  
 \Nab(\phi):=\frac{1}{y}\phi\p_y\phi ,$
and  the linear operator, $L_{a}$ is given by
 \begin{align}
L_{a} :=-\Lap{4}-\frac{8}{(1+ y^2)^2}+\frac{4}{y(1+ y^2)}\p_y +a y\p_y.
\label{Lab}
\end{align}
An important fact here is that the operator $\La$ is self-adjoint on the space
$L^2(\Rp, \g_{a }(y) y^3 dy)$,
where 
$\g_{a }^{-1/2}(y)= \chi(y) e^{\frac{a}{4} y^2},$ 
 with the inner product 
 $\ip{f}{g}:=\int_0^\infty f(y) g(y)\, \g_{a }(y) y^3 dy.$ 
 One can check the self-adjointness of $L_{a}$ directly or use the unitary map
\begin{equation}\label{gauge-transf}
\phi(y)\ra  \g_{a }^{1/2}(y)\phi(y),
\end{equation} from $ L^2([0,\infty), \g_{a }(y) y^3 dy)$ to
$ L^2([0,\infty), y^3 dy)$, which maps this operator $\La$ into the operator
$\Lab:=\g_{a }^{1/2}L_{a}\g_{a }^{-1/2},$
acting on the space  $  L^2([0,\infty), y^3 dy)$. The latter operator can be explicitly computed to be
\begin{align}\label{Lab-unit}
\Lab :=-\Lap{4}-\frac{8}{(1+ y^2)^2} +\frac{1}{4}a^2 y^2 + \frac{2 a}{1+ y^2}
- 2 a.
\end{align}
  This operator is of the Schr\"odinger type with the real continuous potential tending to $\infty$ as $y \rightarrow \infty$. Therefore, by standard arguments (see e.g. \cite{GS}), it is self-adjoint and its spectrum is purely discrete.  Hence  $\La$ is self-adjoint on the space $ L^2([0,\infty), \g_{a }(y) y^3 dy)$ and has purely discrete spectrum as well. 
 Going through with our analysis shows that $a(\tau)\ra 0$ as $\tau\ra \infty$, which actually complicates the problem and which tells us that the collapse is slower than parabolic one, $\lam(t)= \sqrt{a_0(T-t)}$, for which $a(\tau)=-\lam(t)\dot \lam(t)$ is a constant (say, $a_0$).

Now, it is clear that the stability of the 
the profile $\chi(y)$ is determined largely by the spectrum of the operator $\Lab$. If  the operator $\Lab$ has strictly positive spectrum, one expect the fluctuations $\phi$ will die out as $\tau\ra \infty$ and consequently the solution of \eqref{u-eq} will tend to $\chi(y)$, while  the solution of \eqref{m-eq} will approach $\chi(r/\lam(t))$. On the other hand, if  the operator $\Lab$ has negative eigenvalues then one expects instability. The latter though is 
always the case, since the equations have a negative scaling mode (for a fixed parabolic scaling it is connected to possible variation of the blowup time.) \footnote{A similar analysis applies also in the subcritical case $M<8\pi$ where the solution converges to a self-similar one as $\tau\ra\infty$,  which vanishes as $t\ra\infty$. In this case  the operator $\Lab$ has strictly positive spectrum.}

 If the number of negative eigenvalues is finite, say $k$, then one either goes to an invariant manifold theory and constructs the central-unstable manifold or, uses the (related) modulation theory and embeds $\chi(r/\lam)$ into a $k-$parameter family of almost solutions, say $\chi_{p}(r/\lam)$, where $p$ stands for the $k-1$ parameters (with $\lambda$, or $a$, counted as the first parameter), chosen so that
 \DETAILS{in the decomposition
 \begin{equation} \label{parametriz-p}
u(y,\tau)=\chi_{p(\tau)}(y)+\phi(y,\tau),
\end{equation}
the fluctuation $\phi(y,\tau)$ is (approximately) orthogonal to the negative spectrum eigenfunctions of $\Lab$. If this holds, 
 then the stability is restored and the solution to \eqref{u-eq}  approaches this family as $\tau\ra \infty$. The latter is a big if and sometimes is the hardest part of the analysis. This is where the understanding  the negative spectrum eigenfunctions of $\Lab$ 
 helps, as these eigenfunctions span the tangent space for the deformation (or almost center) manifold $\cM:=\{\chi_{p}(r/\lambda)\ |\ \lambda>0,\ p\}$ and allow one to construct it.}
  the tangent space of the deformation (or almost center-unstable) manifold $\cM:=\{\chi_{p}(r/\lambda)\ |\ \lambda>0,\ p\}$ at $\chi_{p}(y)$ is equal approximately to the eigenspace of negative and (almost) zero spectrum of $\Lab$. Then we can choose the parameters  $p=p(\tau)$ and $a=a(\tau)$ (or  $\lambda =\lambda(\tau)$), so that the solution $u(y,\tau)$ can be decomposed as
 \begin{equation} \label{parametriz-p}
u(y,\tau)=\chi_{p(\tau)}(y)+\phi(y,\tau),
\end{equation}
with the fluctuation $\phi(y,\tau)$ orthogonal to  the tangent space of  $\cM$ at $\chi_{p}(y)$, 
$\lan\p_p\chi_{p(\tau)}(\cdot), \phi(\cdot,\tau)\ran=0,$ 
 and therefore (approximately) orthogonal to the negative and almost zero
spectrum eigenfunctions of $\Lab$. If we find such a deformation, $\chi_{p(\tau)}(y)$, 
then the stability is restored and the solution to \eqref{u-eq}  approaches this family as $\tau\ra \infty$. The latter is a big if and 
this is where the understanding  the negative and almost zero spectrum eigenfunctions of $\Lab$ 
helps. 
(One should keep in in mind that the deformation of $\chi(y)$ will change 
the linearized operator $\Lab$ 
and the gauge transformation \eqref{gauge-transf}, 
but both can be easily handled.)

Theorem \ref{thm:main} below implies that the operator  $\Lab$ of the Keller-Segel system has one negative 
(corresponding to the scaling mode mentioned above) and one near zero 
eigenvalue, while the third eigenvalue, $2a+\frac{2a}{\ln\frac{1}{a}}+\O{a\ln^{-2}\frac{1}{a}}$, is positive, but vanishing as $a\ra 0$. (
 It also isolates the correct perturbation (adiabatic) parameter - $\frac{1}{\ln\frac{1}{a}}$.) Hence we have to construct a one-parameter deformation of $\chi(y)$ (remember that $\lambda$, or $a$, is counted as a parameter). For technical reasons it is convenient to use a two-parameter family,  $\chi_{bc}(y)$, with an extra relation between the parameters $a,\ b$ and $c$.
 In \cite{DLS} we \textit{choose the family}
\begin{equation}\label{chibc}
 \chi_{bc}(y):=\frac{4b y^2}{c+ y^2},
\end{equation}
with $b>1$ and both parameters $b$ and $c$ are close to $1$. Note that this family evolves on a different spatial scale than $\phi(y, \tau)$ in \eqref{parametriz-p}, as it can rewritten as $\chi_{bc}(y)=\chi_{\frac{b}{c}, 1}(\frac{y}{\sqrt{c}})$.
The contraction law is obtained by using  the orthogonality condition, $\lan\p_{bc}\chi_{bc}, \phi\ran=0$. The latter is equivalent to two conditions,
\begin{equation} \label{orthog}
\p_\tau\lan\p_{bc}\chi_{b(\tau)c(\tau)}(\cdot), \phi(\cdot,\tau)\ran=0
\end{equation}
and $\lan\p_{bc}\chi_{b(\tau)c(\tau)}(\cdot), \phi(\cdot,\tau)\ran|_{t=0}=0$. We evaluate \eqref{orthog} by using  the evolution equation, $\p_\tau\phi=-L_{a bc}\phi+\Fabc+\Nab(\phi)$, for $\phi$, similar to \eqref{phi-eq}, which follows by plugging the decomposition \eqref{parametriz-p} into \eqref{u-eq}, and the explicit expressions,
\begin{align}\label{zbc}
 \zeta_{bc1}(y):=\frac{1}{4}\p_b\chi_{bc} (y)=\frac{y^{2}}{c+ y^{2}},\ \quad \zeta_{bc2}(y):=\frac{1}{4b}\p_c\chi_{bc} (y)=\frac{y^{2}}{(c+ y^{2})^2},
\end{align}
for the tangent vectors to the manifold $\cM$. (The scaling mode $\zeta_{bc0}(y):=\frac{1}{8 bc}y\p_y\chi_{bc} (y)=\frac{ y^{2}}{(c+y^{2})^2}$ a multiple of $\zeta_{bc2}(y)$ which confirms that one of the parameters is superfluous.)  This gives ordinary differential equations for $a$, $b$ and $c$ with higher order terms depending on 
$\phi$:
%
\begin{align}\label{abc-eqs}
\begin{cases}
&c_\tau +
S(\phi, a, b, c) a_\tau=2 a - \frac{4d}{\ln(\frac{1}{a})}+ R(\phi, a, b, c),\\
 & \frac{d_\tau}{a}-
S(\phi, a, b, c) a_\tau =-\frac{2d}{\ln(\frac{1}{a})}+  R(\phi, a, b, c), 
\end{cases}\end{align}
where $d:=b-1$, $|S(\phi, a, b, c)|\lesssim \frac{\|\phi\|_{L^2}}{a^{d+1}\ln(\frac{1}{a})}$ and
$|R(\phi, a, b, c)|\lesssim \frac{a}{\ln^2(\frac{1}{a})} \frac{1}{\ln(\frac{1}{a})}[d\|\phi\|_{L^2}+\|(1+y^2)^{-1}\phi\|_{L^2}^2].$ 
%
These  higher order terms are controlled by using a differential inequality for the Lyapunov functional $\phi\mapsto\|\phi\|_{L^2}^2$ and the inequality $\lan\phi, \Labc\phi\ran\ge 2a\|\phi\|^2$, which follows from our result below. \footnote{Presently, without taking into account the nonlinearity in the equation $\p_\tau\phi=-L_{a bc}\phi+\Fabc+\Nab(\phi)$
(\cite{DLS}).} 
 (One can also use higher order Lyapunov functionals like $\lan\phi, \Labc^k\phi\ran,\ k\ge 1$. The fact that the positive eigenvalues of $L_{a bc}$ vanish makes estimating $\phi$ a delicate matter.)
 Finally, we choose a relation between  $a$, $b$ and $c$ so as to eliminate large terms in the corresponding vector fields, namely, $d=\frac{1}{2}a\ln(\frac{1}{a})$,
This leads, in the leading order, to the differential equation
\begin{align}\label{orth-eqns-lead7}
 a_\tau = -  \frac{2a^2}{\ln(\frac{1}{a})},
\end{align}
whose solutions, in the leading order, are $\frac{1}{a}\left (\ln{\frac{1}{a}}+O(1) \right )=2\tau$ 
 which results in
$ \ln{\frac{1}{a(\tau)}}=\ln{2\tau}-\ln{\ln{2\tau}}+\frac{\ln{\ln{2\tau}}}{\ln{2\tau}}+O\left (\frac{1}{\ln{2\tau}}\right )$. 
Recalling that $\lambda (t)\dot\lambda (t)=-a(\tau)$ and using that $\lambda (t)\dot\lambda (t)=\lambda (\tau(t))^{-1}\p_\tau\lambda(\tau(t))$, we obtain
that $-\ln{\lambda (\tau)}= \int^\tau a(\tau)d\tau$ giving
\begin{align}\label{lnlambda}
-\ln \lambda (t)=
\frac{(\ln{2\tau})^2}{4}-\frac{(\ln{2\tau})\ln{\ln{2\tau}}}{2}+O(\ln{2\tau}),
\end{align}
while $\tau$ is related to $t$ by
\begin{align}\label{Ttdef}
(T-t)=  \int^\infty_\tau \lambda(\tau')^2d\tau'= (1/2)\left  [(\ln{2\tau})^{-1}+O((\ln{2\tau})^{-2}\ln{\ln{2\tau}})\right ] \nonumber  \\
\times \exp{\left[ -\frac{(\ln{2\tau})^2}{2}+(\ln{2\tau})(\ln{\ln{2\tau}})+O(\ln{2\tau})\right ]},
\end{align}
where the integral was computed asymptotically in a limit $\tau \gg 1$ and $T$ is a constant of integration determined by initial conditions.
Solving the equations (\ref{lnlambda}) and (\ref{Ttdef})  together  for $t\to T$ yields the law
\begin{align}\lambda(t)=(T-t)^\frac{1}{2}
e^{-|\frac{1}{2}\ln(T-t)|^\frac{1}{2}}(c_1+o(1)),
\end{align}
%
which  coincides in the leading order (up to the constant $c_1$) with the one obtained in \cite{Ve1, Lu,DyachenkoLushnikovVladimirovaAIPConfProc2011}.
The constant $c_1$ can be obtained only if we consider a next order correction beyond the accuracy of the equation (\ref{orth-eqns-lead7}) (see \cite{Lu,DyachenkoLushnikovVladimirovaAIPConfProc2011})
which is outside the scope of
this paper.

The above arguments show that the  spectral analysis of the linearized equation on the the collapse or blowup profile 
is the key step in describing  critical collapse or blowup laws for nonlinear evolution equations. (This also applies to stability analysis of stationary and traveling wave solutions.)
 Typically, this is a rather subtle affair with very few general techniques available. In this paper we develop such a technique for 
differential operators of the form \eqref{Lab-unit}, 
\begin{equation}
{\cal L}:=
-\Lap{4}-\frac{8}{(1+y^2)^2}+\frac{1}{4}a^2 y^2+\Wa(y),
\label{OperatorL}
\end{equation}
 defined on the space $L^2([0,\infty), y^3 dy)$, which is the subspace of radially symmetric functions in $L^2(\R^4)$.  Here, as indicated 
above, the parameter $a$ is assumed small and positive and the potential $\Wa$, 
to satisfy the bound
\begin{equation}
 0 \le  \Wa(x)\le\frac{ C a}{1+y^2},
\label{Wa}
\end{equation}
where $C$ is a positive constant. We assume that $W_a(x)$ is positive in order to fix the bottom of the spectrum: now ${\cal L} \ge 0$ (see below). 
Our main result is the derivation of an approximate equation for the low-lying eigenvalues of ${\cal L}$, which enter into the stability analysis mentioned above.  Let $\Psi$ be the digamma function, defined by $\Psi(s)=\frac{d}{ds}\ln \Gamma(s)$,
where $\Gamma(s)$ is the gamma function.  Define
\begin{equation*}
0 \le \mu:=2\int_0^\infty \frac{\Wa(y)}{(1+y^2)^2}\, y^3 dy\le C a.
\end{equation*}
and $K:=\ln 2-1-2\gamma$, where $\gamma =-\Psi(1)=0.577216\ldots$ is the Euler-Mascheroni
constant. We have
\begin{thm} \label{thm:main}
The operator ${\cal L}$ is self-adjoint on $L^2([0,\infty), y^3 dy)$, positive and its spectrum is discrete.  For $a$ small, the eigenvalues of ${\cal L}$ in the interval $[0, Ca]$, for any given $C>0$, satisfy the equation
\begin{equation}
\frac{\lambda}{\mu+a}\lsb \ln\frac{1}{a}-\Psi\lb 1-\frac{\lambda}{2a} \rb+K \rsb=1+\O{a^{1/2}\ln\frac{1}{a}} .
\label{eqn:eigenvalue}
\end{equation}
\end{thm}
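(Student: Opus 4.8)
The plan is to use matched asymptotics turned into a rigorous shooting/matching argument. Fix an eigenvalue $\lambda\in[0,Ca]$ with (normalized radial) eigenfunction $\psi$ solving ${\cal L}\psi=\lambda\psi$ on $[0,\infty)$ with the $L^2(y^3\,dy)$ integrability and the correct behaviour at the origin (regularity as a radial function in $\R^4$). First I would pass to the unitary change $\psi(y)=y^{-3/2}v(y)$ (or directly analyze the radial operator) so that the equation becomes a one–dimensional Schr\"odinger equation; the coefficient of $y^{-2}$ encodes the $d=4$ angular part. Then I would split the half-line into an \emph{inner region} $y\lesssim a^{-1/2}$ (more precisely $y\ll R_i$ with $1\ll R_i\ll a^{-1/2}$) and an \emph{outer region} $y\gtrsim R_o$ with $R_o$ of comparable magnitude, chosen so that the two overlap in an intermediate zone $R_i\lesssim y\lesssim R_o$.

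In the inner region the terms $\tfrac14 a^2 y^2$, $W_a$, and $\lambda$ are all $\O{a}$ perturbations of the unperturbed radial operator ${\cal L}_0:=-\Lap{4}-\tfrac{8}{(1+y^2)^2}$, whose zero-energy solutions are explicit: one is (the derivative of) the known scaling mode $\chi$, i.e. $\z_0(y)=\tfrac{y^2}{(1+y^2)^2}$ (bounded, in $L^2$) and the other is its linearly independent partner $\z_1(y)=\tfrac{y^2}{1+y^2}$ (the $\p_b$-mode, which grows). Using the variation-of-parameters (Green's function) representation for ${\cal L}_0$, I would solve ${\cal L}\psi=\lambda\psi$ perturbatively starting from the regular solution $\z_0$ and compute, to leading and first subleading order in the small parameters $\lambda/a$ and $a y^2$, the coefficient of the growing mode $\z_1$ that is generated: the forcing is $(\tfrac14a^2y^2+W_a-\lambda)\z_0$, and integrating it against $\z_1$ over $y\in(0,\infty)$ produces exactly the combination $\mu+a-\lambda(\text{something})$; the $\ln\tfrac1a$ and the constant $K=\ln2-1-2\gamma$ arise from the logarithmically divergent integral $\int \z_0\z_1\,y^3dy$ cut off at $y\sim a^{-1/2}$ together with the $\lambda/(2a)$-dependent corrections, which is where the digamma function $\Psi(1-\tfrac{\lambda}{2a})$ enters (it is the regularized value of $\sum_n \tfrac1{n-\lambda/2a}$ coming from the hypergeometric structure of the inner solution at next order, or equivalently from the $\Gamma$-function in the confluent-type equation $\tfrac14a^2y^2 - \lambda$). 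Thus the inner solution for $1\ll y\ll a^{-1/2}$ has the form $\psi\sim \alpha\,\z_0(y)+\beta\,\z_1(y)$ with $\beta/\alpha = (\mu+a)\big[\ln\tfrac1a-\Psi(1-\tfrac{\lambda}{2a})+K\big]^{-1}\big(1+\O{a^{1/2}\ln\tfrac1a}\big)$, i.e. $\z_0$-behaviour $\sim \alpha y^{-2}$ and $\z_1$-behaviour $\sim \beta$ (constant).

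In the outer region the potential $-\tfrac{8}{(1+y^2)^2}$ and $W_a$ are negligible and the equation reduces to $-\Lap{4}\psi+\tfrac14a^2y^2\psi=\lambda\psi$, i.e. a 4-dimensional radial harmonic oscillator at energy $\lambda$. Its solutions are explicit in terms of confluent hypergeometric (Kummer / Whittaker) functions in the variable $z=\tfrac{a}{2}y^2$; the $L^2(y^3dy)$ (decaying) solution is the one proportional to $U(1-\tfrac{\lambda}{2a},2,\tfrac{a}{2}y^2)$ (up to the Gaussian weight $e^{-ay^2/4}$ that the gauge transformation strips off). I would then expand this decaying outer solution for small argument $\tfrac{a}{2}y^2\ll1$, i.e. in the overlap zone; its small-$z$ asymptotics of $U(1-\tfrac{\lambda}{2a},2,z)$ is classical and produces precisely a linear combination of $z^{-1}\sim (ay^2)^{-1}$ (matching the $\z_0\sim y^{-2}$ behaviour) and a constant term $\sim \ln z + \Psi(1-\tfrac{\lambda}{2a}) + (\text{explicit const involving }2\gamma,\ln2)$ (matching $\z_1\sim\text{const}$), with a computable ratio. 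Matching the two ratios $\beta/\alpha$ from inside and outside in the overlap region yields the quantization condition \eqref{eqn:eigenvalue}, with the $\O{a^{1/2}\ln\tfrac1a}$ error coming from the width of the matching window (one needs $R_i,R_o$ with $a^{1/2}R_i^2\to0$ and $R_i\to\infty$; the optimal choice $R_i\sim a^{-1/4}$ or so gives errors of size $aR_i^2 + R_i^{-2}\sim a^{1/2}$, times the logarithm from differentiating the matching).

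To make this rigorous rather than formal I would: (i) for each candidate $\lambda$ construct the exact inner solution regular at $0$ by a contraction-mapping argument in a weighted sup-norm on $[0,R_i]$, getting the expansion above with controlled remainder; (ii) construct the exact outer $L^2$-solution on $[R_o,\infty)$ likewise (or use the known entire/asymptotic theory of Whittaker functions directly); (iii) show these are the unique (up to scalar) solutions with the required behaviour at the two ends, so an eigenvalue exists iff the Wronskian of the two extensions vanishes in the overlap; (iv) compute that Wronskian asymptotically, which is exactly the content of \eqref{eqn:eigenvalue}. Self-adjointness, positivity (${\cal L}={\cal L}_0 + (\text{nonneg.}) \ge 0$ once one knows ${\cal L}_0\ge0$, which follows since $\z_0>0$ is its zero mode, a ground state), and discreteness of the spectrum are already argued in the text from the confining potential $\tfrac14a^2y^2$, so the only real work is the matching.

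I expect the \textbf{main obstacle} to be the rigorous control of the matching in the overlap region: one must track the \emph{subleading} terms of both the inner and outer solutions (the constant and the $\log$, not just the leading power), show that the errors from the perturbative construction are genuinely smaller than these subleading terms uniformly for $\lambda\in[0,Ca]$ (including the delicate regime $\lambda/2a$ near a positive integer, where $\Psi(1-\tfrac{\lambda}{2a})$ blows up — there the eigenvalue equation degenerates and must be read correctly), and to pin down the precise constant $K=\ln2-1-2\gamma$, which requires getting the normalization of the Whittaker function's small-argument expansion and the inner integral $\int_0^\infty \z_0\z_1\,y^3\,dy$ exactly right. The bookkeeping of which region contributes the $\ln2$, which contributes $-1$, and which contributes $-2\gamma$ is where the argument is most error-prone.
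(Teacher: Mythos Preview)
Your strategy is the paper's: inner solution built by perturbing off the zero mode of $\cL_0=-\Lap{4}-8/(1+y^2)^2$ via a Neumann series in a weighted sup norm, outer solution given by the decaying confluent-hypergeometric solution of the radial 4D oscillator, and matching of the two expansions in an overlap window $y\sim a^{-1/4}$, with the error $a^{1/2}\ln\frac1a$ coming from that window width. Two concrete corrections to your details.

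First, your inner zero modes are wrong. The functions $\zeta_0=y^2/(1+y^2)^2$ and $\zeta_1=y^2/(1+y^2)$ are the tangent vectors in the \emph{ungauged} picture (zero modes of $L_0$ on $L^2(\g_0 y^3dy)$), not of the operator $\cL$ in the theorem, which lives on $L^2(y^3dy)$ after the unitary map \eqref{gauge-transf}. Neither $\zeta_0$ nor $\zeta_1$ is annihilated by $\cL_0$ (check the behaviour at $y=0$), and $\zeta_0\notin L^2(y^3dy)$ contrary to what you wrote. The correct pair is $\eta_1(y)=1/(1+y^2)$ and $\eta_2(y)=\eta_1(y)\big(\tfrac12 y^2+\ln y^2-\tfrac12 y^{-2}\big)$; note that $\eta_2$, not $\eta_1$, carries the $\ln y^2$ that you need for the matching.

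Second, you have the origin of the digamma backwards. The inner expansion is entirely elementary: one step of variation of parameters gives $\phi^{in}_\lambda=\eta_1-\cL_0^{-1}V_i\eta_1+\cdots = y^{-2}-\tfrac{\lambda}{4}\ln y^2+\tfrac14(2\lambda+\mu)+\cdots$, with no special functions. The term $\Psi(1-\tfrac{\lambda}{2a})$ and the constants $\ln 2$, $-1$, $2\gamma$ all come in one package from the \emph{outer} side, namely from the small-argument expansion of $U(1-\tfrac{\lambda}{2a},2,\tfrac{a}{2}y^2)$ (the classical series for $U(a,n+1,z)$ at $z\to0$ contains $\Psi(a)$ explicitly). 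So your anticipated ``bookkeeping of which region contributes the $\ln 2$, which contributes $-1$'' is not an obstacle: the entire constant $K$ and the digamma arrive together from the Kummer-$U$ expansion, and the matching is a straightforward identification of the $\ln y^2$, $y^{-2}$ and constant coefficients on the two sides.
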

As $a\rightarrow 0$, solutions of
\begin{equation} \label{eveqn}
\frac{\lambda}{\mu+a}\lsb \ln\frac{1}{a}-\Psi\lb 1-\frac{\lambda}{2a} \rb+K \rsb=1
\end{equation}
converge to the eigenvalues of ${\cal L}$.  We solve this equation approximately in Section \ref{solving} to obtain
\begin{equation}
 \lambda_{n}=\left\{\begin{array}{ll}\frac{\mu+a}{\ln\frac{1}{a}+K+\gamma}+\O{a\ln^{-3}\frac{1}{a}} &n=0\\
2n a+\frac{2a}{\ln\frac{1}{a}+K+\gamma-H_{n-1}-\frac{\mu+a}{2a n}}+\O{a\ln^{-3}\frac{1}{a}} & n\ge 1,
\end{array}\right.
\label{eqn:explicit_eigvalue_estimates}
\end{equation}
where $H_{n}:=\sum_{k=1}^n 1/k$.  These approximations to the eigenvalues, especially the one obtained by solving numerically \eqref{eveqn}, match remarkably well with the numerical computation of the spectrum of ${\cal L}$, the results of which are
given in Figure \ref{ev.plot} below. The fact that the numerical solution to  \eqref{eveqn} gives much better approximation to the true eigenvalues is not surprising:
the approximation  \eqref{eqn:explicit_eigvalue_estimates} has the logarithmic error while the equation  \eqref{eveqn} is obtained with the power accuracy.

Figure \ref{ev.plot} compares the eigenvalue approximations obtained using  \eqref{eveqn} and
\eqref{eqn:explicit_eigvalue_estimates}  against the numerical computation of the first three eigenvalues of ${\cal L}$. We have taken $\Wa=2a/(1+y^2)$ (this gives $\mu=a$). Numerical procedure is explained in Section \ref{numerics}.    The computations confirm the spectral picture we have obtained analytically with the high precision numerical computations.
\begin{figure}[ht]
\begin{center}
\includegraphics[width=5in] {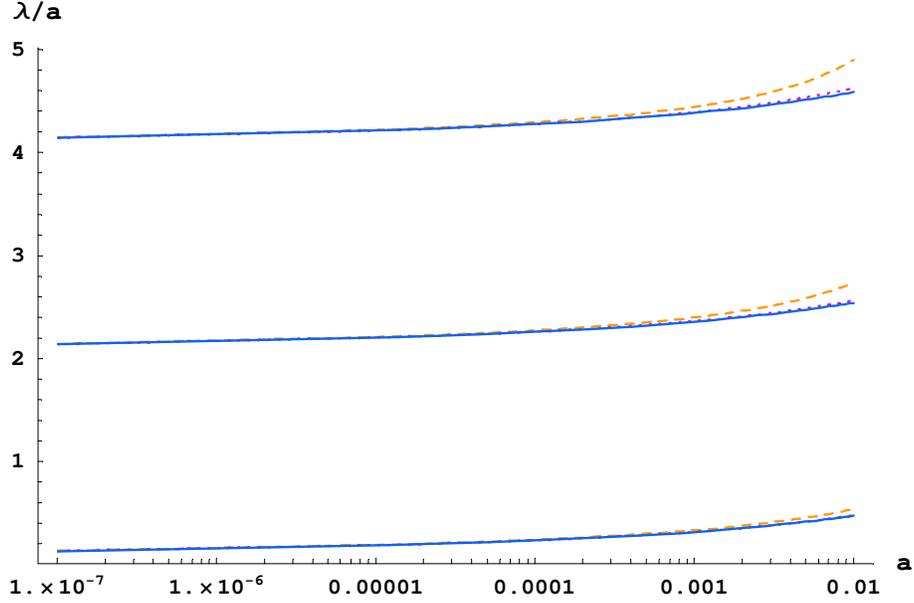}
\caption{The quantity $\lambda_i/a$ is plotted against $a$ for the first three eigenvalues obtained in three different ways.
The solid lines are numerically computed, the dashed lines are obtained using the expressions in \eqref{eqn:explicit_eigvalue_estimates}, and the dotted line (coinciding with the numerical
computation for the first eigenvalue) is obtained by numerically solving the eigenvalue equation \eqref{eveqn}. 
The eigenvalues plots obtained by solving the full equation and the equation  \eqref{eveqn} are very close.}
\label{ev.plot}
\end{center}
\end{figure}

We analyze the spectrum of ${\cal L}$ in the interval $[0,C a]$ for any fixed $C$ independent of $a$.  This is sufficient for the stability analysis for the problem described above.  However, we believe that our results are valid in a larger interval.

 We have already discussed the self-adjointness of ${\cal L}$ on $L^2([0,\infty), y^3 dy)$ and the discreteness of its spectrum. 
The scaling symmetry of \eqref{KS} implies that the function $\eta_1(y):=1/(1+y^2)$ is a null vector of the operator
\begin{equation*}
\cL_0:= 
-\Lap{4} -\frac{8}{(1+y^2)^2}.
\end{equation*}
By the Perron-Frobenius argument this implies that the above operator is non-negative and has the non-degenerate eigenvalue at $0$. The first fact implies that  ${\cal L}\ge 0$.

As was mentioned above, Theorem \ref{thm:main} is proven by a rigorous version of the method of matched asymptotics (see \cite{BeOr1999}).Though this method is standard; other instances of its use in spectral
problems can be found in \cite{Ba1996, BoNa1998,  Br1968, DoGaKa1998,LushnikovAtm1998rus}, we, however, believe that our extension of this method is novel and robust and can be
used a large variety of linear differential operators arising in the linearization of nonlinear equations and hopefully can be extended to nonlinear ones as well (in this case perturbation series below
should be replaced by fixed point iterations).

Indeed,
like arguments outlined above, our approach is fairly general. It requires essentially only the properties listed above: the scaling symmetry and existence of a stationary solution. (In case of asymptotic motion of solitons, the scaling symmetry is replaced by translational, or more generally Galilean or Poicar\'e, invariance.)
Indeed, the potential term $-\frac{8}{(1+ y^2)^2}$ comes from linearlizing the nonlinear part of the equation on the stationary solution $\chi(y)$ ($\frac{8}{(1+ y^2)^2}=\frac{1}{y}\p_y\chi(y)$), while  the confining potential $\frac{1}{4}a^2 y^2$ comes from the term (vector-field), $a y\p_y$, generated by the time-dependent rescaling, and it occurs in all such problems. We do not use the explicit form of $-\frac{8}{(1+ y^2)^2}$ (besides of its asymptotics at $y\ra \infty$ and at $y\ra 0$), but the fact that, since  the stationary solution, $\chi(y)$, breaks the scaling symmetry, it leads to the zero mode  $\eta_1(y):= y\p_y\chi(y)$ of the original linearization,
$\cL_0:=\g_{0 }^{1/2}L_{0}\g_{0}^{-1/2}=-\Lap{4}-\frac{8}{(1+ y^2)^2}$ (after the transformation $\xi\ra \g_{0 }^{1/2}\xi$).

\DETAILS{The operator ${\cal L}$ appears in the analysis of the well known Keller-Segel equation (also called Patlak-Keller-Segel equation) \cite{BrCoKaScVe1999,Lu,DeLuSi2010} introduced in \cite{Patlak1953,KeSe1970} as a model for chemotactic behaviour of biological organisms.  After simplifying assumptions, this system becomes
\begin{align}
\begin{split}
 \p_t \rho&=\Delta\rho-\nabla\cdot(\rho\nabla c),\\
0&=\Delta c+\rho,
\end{split}
\label{KS}
\end{align}
with $\rho$ and $c$ satisfying the no-flux Neumann boundary conditions. It is known that solutions to this equation can develop,  in finite time, infinite mass at a point in space.   Scaling symmetry,
\begin{align*}
\rho(x,t)\rightarrow\frac{1}{\lambda^2}\rho\lb\frac{1}{\lambda}x,\frac{1}{\lambda^2}t\rb\
\mbox{and}\ c(x,t)\rightarrow c\lb\frac{1}{\lambda} x,\frac{1}{\lambda^2} t\rb,
\end{align*}
and existence of the static solution,
$\rho_{st}(x):=\frac{8}{(1+|x|^2)^2}\ \mbox{and}\ c_{st}(x):=-2\ln(1+|x|^2),$
suggest that this occurs along the manifold ${\cal M}:=\{\ (\lambda^{-2}\rho_{st}(x/\lambda), c_{st}(x/\lambda))\ |\ \lambda>0\}$ generated by scaled static solutions.  The linearization of \eqref{KS} about ${\cal M}$ and the parametrization
\begin{equation*}
\rho(x,t)=\lambda^{-2}(t)\rho_{st}(x/\lambda(t))+f(x/\lambda(t))\eta(x/\lambda(t),t)
\end{equation*}
of $\rho(x, t)$, with an appropriate $f(y)$, produces the operator ${\cal L}$.  Knowledge of the spectrum of ${\cal L}$ determines the linear stability of ${\cal M}$.  For more details we refer to \cite{DeLuSi2010, Lu}.}

Finally, we mention the major limitation of our set-up - we deal with radially symmetric solutions. Since the only stationary solution is  radially symmetric (unfortunately, in contrast to biological observations), the linearized operator for the full equation commutes
with rotations and therefore can be decomposed in the direct sum of radial operators. Hopefully, our analysis can be extended to each component of this sum.

This paper is organized as follows.  In Section \ref{sec:inner}, using perturbation theory, we solve the eigenvalue problem 
\begin{equation}
{\cal L}\phi_{\lambda}=\lambda\phi_\lambda
\label{eqn:EigenValueProblem}
\end{equation}
in the inner region and then proceed to find the leading order expression.  We also use perturbation theory in Section \ref{sec:outer} to solve \eqref{eqn:EigenValueProblem} in the outer region and 
find the leading order behaviour of this solution. 
In Section \ref{matching} we match the inner and outer solutions and in Section \ref{solving} we solve the equation \eqref{eveqn} to obtain the solutions \eqref{eqn:explicit_eigvalue_estimates}.  Finally, in Section \ref{numerics} we briefly discuss our numerical procedure. In Appendix A we give explicit derivations of some of the expressions of Section \ref{sec:outer}, which were obtained with reference to the theory of special functions.

In what follows, we use the notation $f\lesssim g$ for $f, g \ge 0$, if there exists a positive constant $C$ such that $f\le C g$ holds.  If the inequality $|f|\le C|g|$ holds then we write $f=\O{g}$.  We also write $f\ll g$ or $f=\smallO{}{g}$ if $f/g\rightarrow 0$ as $a$ or $y$ approach some limit (always specified) and $f\sim g$ if the quotient converges to 1.

\section{Solutions in the Inner Region}
\label{sec:inner}
In what follows, $\lambda >0$ is a spectral parameter and $0 < a\ll 1$. To simplify the expression we assume in what follows that $\lambda \le Ca$ for some $C>0$. Below we take $\Rin=\epsin/a^\frac{1}{2}$ with $a\ll \epsin\ll 
 \sqrt{\frac{a}{\lambda}\frac{1}{\ln\frac{1}{a}}}$.
 The main result of this section is the following
\begin{prop}
\label{prop:innersol}
If $a$ is small enough, then the solution to the eigenvalue problem \eqref{eqn:EigenValueProblem} in the inner region $[0,\Rin]$ is unique, modulo an overall constant factor. For $y\in[\Rout,\Rin]$, $\Rout\gg 1$ as $a\rightarrow 0$, this solution is given by
\begin{equation}
\phi_\lambda^{in}=\frac{1}{y^2}-\frac{1}{4}\lambda\ln y^2+\frac{1}{4}(2\lambda+\mu)
+{\cal R}_i,
\label{phiinner}
\end{equation}
where
\begin{equation}
{\cal R}_i=\O{\lambda\frac{\ln^2 y}{y^2}+\frac{1}{y^4} +a^2 y^2+ \epsin^4+ \lsb\frac{\lambda}{a}\epsin^2\ln\frac{1}{a}\rsb^2\frac{1}{y^2}}
\label{reminner}
\end{equation}
and the $\O{f}$ signifies the bound $|\O{f}| \le C f$ with a uniform constant $C$. 
\end{prop}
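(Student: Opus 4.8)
The plan is to treat the eigenvalue equation $\mathcal{L}\phi_\lambda=\lambda\phi_\lambda$ on $[0,\Rin]$ as a regular perturbation of the equation $\mathcal{L}_0\phi=0$, whose explicit solutions we already know. First I would rewrite the ODE in the form
\begin{equation}
-\Lap{4}\phi=\lsb \frac{8}{(1+y^2)^2}-\frac{1}{4}a^2y^2-\Wa(y)+\lambda\rsb\phi,
\label{ode-rewrite}
\end{equation}
so the right-hand side is a ``small'' term on the inner region, since there $a^2y^2\lesssim\epsin^2\ll1$, $\Wa\lesssim a$, and $\lambda\lesssim a$. The unperturbed operator $\mathcal{L}_0=-\Lap{4}-8(1+y^2)^{-2}$ is a radial Schr\"odinger operator on $\R^4$ with one explicit zero mode $\eta_1(y)=1/(1+y^2)$; the second solution of $\mathcal{L}_0\eta=0$ is obtained by reduction of order, $\eta_2(y)=\eta_1(y)\int^y \frac{ds}{s^3\eta_1(s)^2}$, which behaves like $1/y^2$ near the origin and like $\frac14\ln y^2$ for large $y$ (I would record the exact asymptotics, since the constants $\frac14(2\lambda+\mu)$ in \eqref{phiinner} come from them). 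Regularity at $y=0$ selects, up to a constant, the solution behaving like $\eta_2\sim 1/y^2$ there — except that one must be careful that a genuine eigenfunction of $\mathcal{L}$ must lie in $L^2(y^3\,dy)$, so near $y=0$ it must in fact be the bounded combination; I'd clarify that ``the solution in the inner region'' means the one picked out by matching to the behaviour at the \emph{outer} end of $[0,\Rin]$, and that uniqueness modulo a constant follows from a Wronskian/Sturm argument once boundary behaviour at $0$ is fixed.

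The core step is the variation-of-parameters (Duhamel) representation: writing $P(y):=\frac{8}{(1+y^2)^2}-\frac14 a^2y^2-\Wa(y)+\lambda$ and using the Green's function built from $\eta_1,\eta_2$ with Wronskian normalization, a solution of \eqref{ode-rewrite} solves the integral equation
\begin{equation}
\phi(y)=\eta_2(y)+\int_{\Rin}^{y}\lsb \eta_1(y)\eta_2(s)-\eta_2(y)\eta_1(s)\rsb\lsb\lambda-\tfrac14a^2s^2-\Wa(s)\rsb\phi(s)\,s^3\,ds,
\label{duhamel}
\end{equation}
with the leading term $\eta_2$ chosen so that $\phi\sim 1/y^2$ at the inner end; the $8(1+s^2)^{-2}$ piece of $P$ has been absorbed into $\mathcal{L}_0$ already, so it does not appear in the integral kernel — it is precisely the fact that this dominant part of the potential is exactly solvable that makes the scheme work. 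I would then set up the iteration $\phi^{(0)}=\eta_2$, $\phi^{(k+1)}=$ (right side of \eqref{duhamel} with $\phi^{(k)}$), and show it converges in a weighted sup-norm on $[\Rout,\Rin]$, say $\|\psi\|:=\sup_y |\psi(y)|/(1+\tfrac14\ln y^2)$ or a variant adapted to the $\ln y$ growth of $\eta_2$. The contraction constant is controlled by $\int_{1}^{\Rin}(\lambda+a^2s^2+\Wa(s))\,(\ln s)^2\,s^{-1}\cdot s^3\cdot s^{-3}\,ds$-type quantities; using $\Rin=\epsin/a^{1/2}$ one gets $\lambda\ln^2\Rin + a^2\Rin^2 + \mu\ln\Rin \lesssim (\lambda/a)\epsin^2\ln^2\tfrac1a + \epsin^2 + a\ln\tfrac1a$, which is $\ll 1$ by the hypothesis $a\ll\epsin\ll\sqrt{\frac{a}{\lambda}\frac{1}{\ln(1/a)}}$. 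This also pins down which error terms survive: the first Duhamel correction evaluated against $\lambda$ produces the term $-\frac14\lambda\ln y^2$, against $\Wa$ produces the constant involving $\mu=2\int\Wa\eta_1^2 y^3dy$ (note the appearance of $\eta_1^2=(1+y^2)^{-2}$), and the tails/second iterates produce each of the five error terms displayed in \eqref{reminner} — I'd verify this term by term, matching $\lambda\ln^2y/y^2$ to the $\eta_1\cdot(\text{correction to }\eta_2)$ cross term, $1/y^4$ to the next correction of the $1/y^2$ part, $a^2y^2$ and $\epsin^4$ to the confining-potential contributions, and $[\frac\lambda a\epsin^2\ln\frac1a]^2\,y^{-2}$ to the square of the first-order shift in the coefficient of $\eta_1$.

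For \emph{uniqueness} modulo a constant: the second linearly independent solution of the full equation grows like $\eta_1\sim$ const (or is singular at $0$ like $y^{-2}$, depending on normalization) — in any case a Wronskian argument shows two solutions that both have the prescribed behaviour at one endpoint must be proportional, and the perturbation is too small to create an extra $L^2$ solution. I expect the main obstacle to be \textbf{bookkeeping the error terms}: showing that the iteration not only converges but that its limit differs from the stated closed form $\frac1{y^2}-\frac14\lambda\ln y^2+\frac14(2\lambda+\mu)$ by \emph{exactly} $\mathcal{R}_i$ with a uniform constant requires carefully tracking how each of $\lambda$, $a^2y^2$, $\Wa$ propagates through \eqref{duhamel}, including the ``boundary'' contributions at $y=\Rin$ versus $y=\Rout$, and checking the logarithmic factors do not conspire to violate the claimed bounds. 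A secondary subtlety is justifying the passage from the integral equation on $[\Rout,\Rin]$ to an honest solution on all of $[0,\Rin]$ regular at the origin — this needs the classical local existence/regularity theory for the Bessel-type singularity of $\Lap{4}$ at $y=0$, which I'd invoke rather than redo.
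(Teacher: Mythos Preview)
Your overall strategy---treat $(\mathcal{L}-\lambda)\phi=0$ on $[0,\Rin]$ as a perturbation of $\mathcal{L}_0\phi=0$, build a Volterra integral equation from $\eta_1,\eta_2$ via variation of parameters, and iterate in a weighted sup-norm---is exactly the paper's approach. However, your execution contains a concrete error that would derail the argument: you have taken the wrong zero-mode as the leading term.

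The solution regular at the origin is $\eta_1(y)=1/(1+y^2)$, not $\eta_2$. Indeed $\eta_2=\eta_1\bigl(\tfrac12 y^2+\ln y^2-\tfrac12 y^{-2}\bigr)$ is \emph{singular} at $y=0$ (it blows up like $-\tfrac{1}{2y^2}$), so an ansatz $\phi=\eta_2+\xi$ cannot produce the inner solution on $[0,\Rin]$. You also have the large-$y$ asymptotics of $\eta_2$ wrong: for $y\gg1$ one has $\eta_2\to\tfrac12$ (a constant), not $\tfrac14\ln y^2$. The $1/y^2$ appearing in \eqref{phiinner} is the \emph{large-$y$} asymptotic of $\eta_1$ in the overlap region $[\Rout,\Rin]$, not a small-$y$ behaviour. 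The paper accordingly writes $\phi_\lambda^{in}=\eta_1+\xi$, uses the weight $\|f\|_i=\sup_{y\le\Rin}|(1+y^2)f|$ (adapted to $\eta_1$'s decay, not to any logarithm), integrates from $0$ rather than from $\Rin$ in the Volterra kernel so that regularity at the origin is preserved, and shows $\|\mathcal{L}_i^{-1}V_i\|_{X_i\to X_i}\lesssim(\tfrac{\lambda}{a}+1)\epsin^2\ln\tfrac1a$. The leading correction $-\mathcal{L}_i^{-1}V_i\eta_1$ then produces the terms $-\tfrac14\lambda\ln y^2+\tfrac14(2\lambda+\mu)$ (the $\ln y^2$ comes from the $\eta_2$-piece of the kernel acting on $\lambda\eta_1$, and the constant $\mu$ from the $\eta_1$-piece acting on $W_a\eta_1$), while the second iterate gives the $\epsin^4$ and $[\tfrac{\lambda}{a}\epsin^2\ln\tfrac1a]^2 y^{-2}$ remainders. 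Once you swap $\eta_1\leftrightarrow\eta_2$ in your ansatz and norm, your outline becomes the paper's proof.
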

\begin{proof} For $\Rin>0$, let $\Lin{\cdot}$ be the norm defined on measurable functions $f:[0,\Rin]\rightarrow \R$ by
\begin{equation*}
 \Lin{f}:=\sup_{y\le \Rin}\left| (1+y^2) f\right| 
\end{equation*}
and let $\Xin:=\left\{ f:[0,\Rin]\rightarrow \R\ |\ \Lin{f}<\infty \right\}$ be the associated Banach space.

Equation \eqref{eqn:EigenValueProblem} can be written as $(\Lopin+\Vin)\phi_\lambda=0$, where $\Lopin$ and $\Vin$ are defined
as
\begin{equation*}
\Lopin:=-\frac{d^2}{dy^2}-\frac{3}{y}\frac{d}{dy}-\frac{8}{(1+y^2)^2}\ \mbox{and}\ \Vin:=-\lambda+\frac{1}{4}a^2 y^2+\Wa(y).
\end{equation*}
The operator $\Lopin$ is self-adjoint on $L^2([0,\infty), y^3 dy)$ with essential spectrum $\sigma_{ess}(\Lopin)=[0, \infty)$.  It is straightforward to check that the function
\begin{equation*}
 \eta_1(y):=\frac{1}{1+y^2}
\end{equation*}
is a solution to the zero mode equation $\Lopin\eta=0$ (as mentioned in the introduction, this is due to the scaling symmetry of \eqref{KS}).  Moreover, positivity of $\eta_1$ and the Perron-Frobenius theorem imply that $0=\inf \sigma(\Lopin)$, and hence $\Lopin$ is a positive operator, but 0 is not an eigenvalue.  The function $\eta_1$ is not an eigenvector of $\Lopin$ since it does not lie in $L^2([0,\infty), y^3 dy)$; we call it a resonance eigenvector.  Using that the Wronskian of $\Lopin$ is $1/y^3$ , we find a second solution of $\Lopin\eta=0$:
\begin{equation*}
 \eta_2:=\eta_1\lb \frac{1}{2} y^2+\ln y^2-\frac{1}{2} y^{-2} \rb.
\end{equation*}
This vector is independent of $\eta_1$, also does not lie in $L^2([0,\infty), y^3 dy)$ and has a singularity at $y=0$. Note that $\eta \in \Xin$.

Using variation of parameters we find that the general solution to $\Lopin\xi=f$ is $\xi(y)=(\Lopin^{-1} f)(y)+C_1\eta_1(y)+C_2\eta_2(y)$, where $C_1$ and $C_2$ are arbitrary constants and
\begin{equation}
(\Lopin^{-1} f)(y):=\eta_1(y)\int_0^y \eta_2(x)f(x) \, x^3 dx-\eta_2(y)\int_0^y \eta_1(x) f(x)\, x^3 dx.
\label{Lopininv}
\end{equation}
\begin{lemma}
\label{prop:inner}
Say that $a\ll \epsin\ll 1$. 
For $a$ small enough,
Equation \eqref{eqn:EigenValueProblem} has a unique, modulo an overall constant factor, solution on $[0,\Rin]$ of the form $\phi_\lambda^{in}=\eta_1+\xi$ with $\xi\in\Xin$ and
\begin{equation} \label{inNeumann}
 \xi=\sum_{n=1}^\infty (-\Lopin^{-1} \Vin)^n\eta_1.
\end{equation}
The convergence is in $\Xin$ and $\|\Lopin^{-1} \Vin\|_{\Xin\rightarrow\Xin}\lesssim (\frac{\lambda}{a}+1)\epsin^2\ln\frac{1}{a}$.
\end{lemma}
\begin{remark} In fact, we can show convergence of the series in an appropriate norm without the condition on the parameter $a$ and the range of $y$.
\end{remark}
\begin{proof}
Recall that ${\cal L}=\Lopin+\Vin$.  Substituting $\phi_\lambda^{in}=\eta_1+\xi$ into \eqref{eqn:EigenValueProblem} and using $\Lopin\eta_1=0$ and the general solution of $\Lopin\xi=f$ found above, we obtain that $\xi=-\Lopin^{-1}(\Vin\xi+\Vin\eta)+C_1\eta_1+C_2\eta_2$.  The term $C_1\eta_1$ leads to change of an overall factor multiplying $\phi_\lambda^{in}$ and therefore it can be dropped.  Next, if $\xi\in\Xin$, then so is $\Lopin^{-1}\Vin\xi$ (see below).  Since $\eta_2\not\in \Xin$, we must, therefore, take $C_2=0$, otherwise we would have a contradiction.  Finally, we rearrange the resulting equation to obtain that
\begin{equation} \label{ininteqn}
 (I+\Lopin^{-1} \Vin)\xi=-\Lopin^{-1} \Vin\eta_1.
\end{equation}
As \eqref{Lopininv} shows, \eqref{ininteqn} is a Volterra equation and therefore the operator on the r.h.s. has an inverse and this inverse can be expanded in the standard perturbation series. We can invert the operator on the left hand side on the space $\Xin$, provided $\|\Lopin^{-1} \Vin\|_{\Xin\rightarrow\Xin}<1$.  To show the latter property, we compute that
\begin{equation}
\begin{split}
\Lin{\Lopin^{-1} \Vin f}& \le \left\{\sup_{y\le \Rin} |\rho(y)\eta_1(y)|\int_0^y \frac{|\eta_2(z) \Vin(z)|}{\rho(z)} \,z^3 dz \right.\\
&\left.+\sup_{y\le \Rin}|\rho(y)\eta_2(y)|\int_0^y \frac{|\eta_1(z) \Vin(z)|}{\rho(z)} \,z^3 dz\right\}\Lin{f},
\end{split}
\label{Linnerbound}
\end{equation}
where $\rho (y)=(1+y^2)$ 
and $f\in\Xin$.  Substituting the expressions for $\rho$, $\eta_1$, $\eta_2$ and $\Vin$ into the first term and using \eqref{Wa} gives that
\begin{align}
 |\rho\eta_1|\int_0^y \frac{|\eta_2 \Vin|}{\rho}(z)\, z^3 dz&=\int_0^y \frac{1}{(1+z^2)^2}\left| \frac{1}{2}z^2+\ln z^2-\frac{1}{2}\frac{1}{z^2}\right|\left| -\lambda+\frac{a^2}{4} z^2+\Wa(z) \right|\, z^3 dz\nonumber\\
&\lesssim\int_0^y\lb \lambda+a^2 z^2+\frac{a}{1+z^2}\rb z\, dz.\label{InnerPart1Bound}
\end{align}
This gives that
$|\rho\eta_1|\int_0^y \frac{|\eta_2 \Vin|}{\rho}(z)\, z^3 dz\lesssim a^2 y^4+a\ln(1+ y^2)+\lambda y^2$,
and hence
\begin{equation}
 \sup_{[0,\Rin]}|\rho\eta_1|\int_0^y \frac{|\eta_2 \Vin|}{\rho}(z)\, z^3 dz\lesssim a^2 \Rin^4+a\ln\Rin^2+\lambda\Rin^2.
\label{part1boundinner}
\end{equation}
Similarly, we compute that
\begin{equation*}
|\rho\eta_2|\int_0^y \frac{|\eta_1 \Vin|}{\rho}(z)\, z^3 dz\lesssim 
\lb y^2+\frac{1}{y^2} \rb\lb ay^4+\lambda y^4 (1+\ln(1+y^2))+ a^2y^6\rb(1+y^2)^{-2} 
\end{equation*}
and hence,
\begin{equation}
\sup_{[0,\Rin]}|\rho\eta_2|\int_0^y \frac{|\eta_1 \Vin|}{\rho}(z)\, z^3 dz\lesssim a^2 \Rin^4+\lambda\Rin^2\ln\Rin+(a+\lambda)\Rin^2. 
\label{part2boundinner}
\end{equation}
Substituting the definition  $\Rin:= \epsin/ a^{1/2}$ into \eqref{part1boundinner} and \eqref{part2boundinner}, using $1\gg\epsin\gg a$ to simplify $\ln\frac{\epsin^2}{a}$ to $\ln\frac{1}{a}$ and then using  
the results in \eqref{Linnerbound} gives that $\|\Lopin^{-1}\Vin\|_{\Xin\rightarrow\Xin}\lesssim \epsin^2\frac{\lambda}{a} \ln\frac{1}{a}$. 
and hence 
$a$ can be taken small enough so that $\|\Lopin^{-1} \Vin\|_{\Xin\rightarrow\Xin}<1$. Now inverting the operator on the l.h.s. of \eqref{ininteqn} and expanding the inverse into the Neumann series completes the proof.
\end{proof}

The expression \eqref{phiinner} for $\phi_\lambda^{in}$
is obtained as follows.  Due to Lemma \ref{prop:inner} and since $\Lin{\eta_1}=\O{1}$, if $a$ is small enough (and $\epsin\ll 1$), 
 then
\begin{equation}
 \phi_\lambda=\eta_1-\Lopin^{-1} \Vin \eta_1+\OX{\Xin}{\epsin^4+\lsb\frac{\lambda}{a}\epsin^2\ln\frac{1}{a}\rsb^2},
\label{ex:phiInnerV0}
\end{equation}
where the $\OX{\Xin}{\epsilon}$ stands for a function bounded as $\|\OX{\Xin}{\epsilon}\|_{\Xin} \lesssim \epsilon$. Since $|f| \le \frac{1 }{1+y^2} \|f\|_i$, 
 the latter is the same as $|\OX{\Xin}{\epsilon}| \lesssim \epsilon (1+y^2)^{-1}$. Substituting the large $y$ expansions $\eta_1=y^{-2}+\O{y^{-4}}$ and $\eta_2=1/2+\O{\ln(y)/y^2}$ into the expression for $\Lopin^{-1} \Vin\eta_1$ (see \eqref{Lopininv}) and keeping only leading terms needed in the region $[\Rout,\Rin]$ gives the expression
\begin{equation*}
\Lopin^{-1} \Vin\eta_1=-\frac{2\lambda +\mu}{4}+\frac{\lambda}{4}\ln y^2+\O{\lambda\frac{\ln^2 y}{y^2}+(a+|\mu|)\frac{\ln y}{y^2}+a^2 y^2}.
\end{equation*}
Substituting this expression into \eqref{ex:phiInnerV0}, with $\eta_1$ replaced with its large $y$ expansion and using that $$|\OX{\Xin}{\epsin^4+\lsb\frac{\lambda}{a}\epsin^2\ln\frac{1}{a}\rsb^2}| \lesssim \lb \epsin^4+\lsb\frac{\lambda}{a}\epsin^2\ln\frac{1}{a}\rsb^2\rb (1+y^2)^{-1},$$
gives \eqref{phiinner}. \end{proof}

\section{Solutions in the Outer Region}
\label{sec:outer}
 In the following discussion we take $\Rout:=\epsout/a^\frac{1}{2}$ with $\frac{a^\frac{1}{2}}{\ln^\frac{1}{4}\frac{1}{a}}\ll\epsout\ll 1$ as $a\rightarrow 0$.  
 The main result of this section is the following
 \begin{prop}
\label{prop:outersol}
 On $[\Rout,\infty)$  \eqref{eqn:EigenValueProblem} has a unique solution $\phi_\lambda^{out}$, which, for $y\in[\Rout,\Rin]$, takes the form
\begin{equation}
 \phi_\lambda^{out}= \ln y^2-\ln\frac{1}{a}-\ln 2-1+2\gamma+\Psi\lb 1- \frac{\lambda}{2a}\rb -\frac{4}{\lambda}\frac{1}{y^2}+\frac{a}{\lambda} +{\cal R}_o,
\label{eqn:phiouter}
\end{equation}
where 
\begin{equation}
 {\cal R}_o=\O{a y^2|\ln (a y^2)|+\frac{|\ln(aR_o^2)|}{R_o^2} (a y^2 )^{\frac{\lambda}{2a}-1}}.  
\label{eqn:remouter}
\end{equation}
\end{prop}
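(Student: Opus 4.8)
The plan is to run, in the outer region, the same perturbative scheme that Lemma~\ref{prop:inner} runs in the inner region, but organized around the confining part of ${\cal L}$ rather than its $-8/(1+y^2)^2$ part. On $[\Rout,\infty)$ both $-8/(1+y^2)^2$ and $\Wa$ are $\O{\Rout^{-4}+a\Rout^{-2}}$, hence negligible, so I would split ${\cal L}=\Lopout+\Vout$ with
\begin{equation*}
\Lopout:=-\Lap{4}+\tfrac{1}{4}a^2y^2-\lam,\qquad \Vout:=-\frac{8}{(1+y^2)^2}+\Wa(y),
\end{equation*}
and work on $[\Rout,\infty)$ in a Banach space $\Xout$ with a weighted sup-norm $\Lout{\cdot}$ adapted to the decay at $y\to\infty$ of solutions of $\Lopout\phi=0$. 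The first step is to solve $\Lopout\phi=0$ exactly: the substitution $t=\tfrac{a}{2}y^2$, $\phi=e^{-t/2}w$ turns it into Kummer's equation $tw''+(2-t)w'-\al w=0$ with index $\beta=2$ and parameter $\al:=1-\tfrac{\lam}{2a}$. Its solution decaying as $y\to\infty$ (since $U(\al,2,t)\sim t^{-\al}$, so $e^{-t/2}U\to0$) is $\phi_0(y):=e^{-t/2}\,U\!\lb 1-\tfrac{\lam}{2a},2,t\rb$, while the independent solution $\phi_1(y):=e^{-t/2}\,M\!\lb 1-\tfrac{\lam}{2a},2,t\rb$ grows. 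From the Wronskian of $\Lopout$ (again proportional to $y^{-3}$) I would build the Green's operator $\Lopout^{-1}$ on $[\Rout,\infty)$ mapping into functions decaying at $\infty$, using $\phi_0$ on the outer side and $\phi_1$ on the part integrated outward from $\Rout$.

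With this in hand, the argument parallels Lemma~\ref{prop:inner}: writing $\phi_\lam^{out}=c\,\phi_0+\Lopout^{-1}(\Vout\phi_\lam^{out})$, the bound $|\Vout(y)|\lesssim a\,y^{-2}+y^{-4}$ for $y\ge\Rout$ together with the Green's-function estimates gives $\|\Lopout^{-1}\Vout\|_{\Xout\to\Xout}\ll1$ once $a$ is small (this is where the restrictions on $\epsout$ are used), so the Volterra/Neumann series converges and the decaying solution on $[\Rout,\infty)$ is unique modulo an overall scalar; I fix that scalar by normalizing the coefficient of $\ln y^2$ to $1$, which defines $\phi_\lam^{out}$.

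It remains to extract its explicit form on the overlap $[\Rout,\Rin]$, where $t=\tfrac{a}{2}y^2\ll1$. For this I would insert the standard small-argument expansion of $U$ in the logarithmic case $\beta=2$ (derived in Appendix~A from the theory of confluent hypergeometric functions),
\begin{equation*}
U(\al,2,t)=\frac{1}{\G(\al)\,t}+\frac{1}{\G(\al-1)}\lsb\ln t+\Psi(\al)+2\g-1\rsb+\O{t\ln t},
\end{equation*}
multiply by $e^{-t/2}=1+\O{t}$, substitute $\al=1-\tfrac{\lam}{2a}$ and $t=\tfrac{a}{2}y^2$, and use $\G(\al-1)/\G(\al)=1/(\al-1)=-2a/\lam$. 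Renormalizing by $\G(\al-1)$ turns $\tfrac{1}{\G(\al)t}$ into $-\tfrac{4}{\lam y^2}$, the $-\tfrac{t}{2}$-correction coming from $e^{-t/2}$ into $+\tfrac{a}{\lam}$, $\ln t$ into $\ln y^2-\ln\tfrac{1}{a}-\ln2$, and $\Psi(\al)$ into $\Psi\!\lb1-\tfrac{\lam}{2a}\rb$, which reproduces \eqref{eqn:phiouter}. The remainder ${\cal R}_o$ then collects the $\O{t\ln t}=\O{ay^2|\ln(ay^2)|}$ tail of this expansion (with the further $e^{-t/2}$ corrections) and the contribution of $\Lopout^{-1}(\Vout\phi_\lam^{out})$; the boundary contribution of the latter at $\Rout$, propagated back through the Green's function, accounts for the term $\propto\Rout^{-2}|\ln(a\Rout^2)|\,(ay^2)^{\frac{\lam}{2a}-1}$ in \eqref{eqn:remouter} (note $\tfrac{\lam}{2a}-1=-\al$).

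The step I expect to be the main obstacle is the uniform control of the confluent hypergeometric functions $U(\al,2,t)$ and $M(\al,2,t)$ — both as $t\to0$, on the overlap, and as $t\to\infty$, which governs the decay — over the whole admissible range of the parameter $\al=1-\tfrac{\lam}{2a}$, $\lam/a\in[0,C]$. Because the index $\beta=2$ is a positive integer the small-$t$ behaviour of $U$ is logarithmic, and the error constants have to be uniform in $a$ and in $\lam/a$ including near the exceptional values $\lam=2na$, where $\al$ is a non-positive integer, $\G(\al-1)^{-1}$ vanishes and $\Psi(\al)$ is singular (though the combinations actually occurring are finite); keeping the expansions in a form whose explicit part is \eqref{eqn:phiouter} and whose remainder \eqref{eqn:remouter} carries no $\Psi$ and is genuinely uniform is the delicate point. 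The remainder of the work — calibrating the Green's operator to the weighted norm $\Lout{\cdot}$ finely enough that the Neumann series closes and, simultaneously, the precise powers of $\Rout$ and the factor $(ay^2)^{\frac{\lam}{2a}-1}$ in \eqref{eqn:remouter} come out with the stated constants — is technical bookkeeping parallel to the inner-region estimates.
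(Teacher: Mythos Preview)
Your proposal is correct and follows essentially the same approach as the paper: the same splitting ${\cal L}=\Lopout+\Vout$, the same reduction of $\Lopout\phi=0$ to Kummer's equation via $t=\tfrac{a}{2}y^2$, $\phi=e^{-t/2}w$, the same choice of $\phi_0$ (via $U$, normalized by $\Gamma(-\tfrac{\lam}{2a})$) and $\phi_1$ (via $M={}_1F_1$), the same Neumann-series construction in a weighted sup-norm space $\Xout$, and the same small-$t$ expansion of $U(\al,2,t)$ in the logarithmic case to read off \eqref{eqn:phiouter}. The only minor deviations are cosmetic: the paper takes both integrals in $\Lopout^{-1}$ from $y$ to $\infty$ (rather than one from $\Rout$), and the factor $(ay^2)^{\frac{\lam}{2a}-1}$ in \eqref{eqn:remouter} arises directly from the weight in $\Lout{\cdot}$ rather than from a boundary contribution at $\Rout$.
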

\begin{proof} For $a$, $\Rout$ and 
on measurable functions $f:[\Rout,\infty)\rightarrow\R$, we define the norm
\begin{equation*}
 \Lout{f}:=\sup_{y\ge \Rout}\left| 
 a y^2 (a y^2 +1)^{-\frac{\lambda}{2a}}e^{\frac{a}{4} y^2}f \right|.
\end{equation*}
Let $\Xout$ be the corresponding Banach space of functions defined on $[\Rout,0)$ with finite norm $\Lout{f}$. 

We write \eqref{eqn:EigenValueProblem}   as $(\Lopout+\Vout)\phi_\lambda=0$, where
\begin{equation} \label{V0def1}
 \Lopout:=-\frac{d^2}{dy^2}-\frac{3}{y}\frac{d}{dy}+\Va(a y)-\lambda\ \mbox{and}\ \Vout:=U(y)+\Wa(y).
\end{equation}
(Recall that $\Va(a y)=\frac{1}{4}a^2 y^2$ and $U(y)=-\frac{8}{(1+y^2)^2}$.) In the outer region $y\ge \Rout$, we treat $\Vout$ as a small potential.  The operator $\Lopout$ is self-adjoint on $L^2([0,\infty), y^3 dy)$.

In what follows the relation $f\sim g$ as $y\rightarrow \infty$ means that $f/g$ converges to a constant (which might depend only on $\frac{\lambda}{2a}$) as $y\rightarrow \infty$ and the notation $f\approx g$ as $y\rightarrow 0$ means that $f/g$ converges to $1$ as $y\rightarrow 0$.

We show in Appendix, Proposition \ref{prop:phiibnd}, that  the equation $\Lopout\phi=0$ which is the eigenvalue equation for the spherically symmetric harmonic oscillator in $D=4$,  has two linearly independent solutions, $\phi_0$ and $\phi_1$, satisfying the estimates
\begin{equation} \label{phi0bnd}
|\phi_0| \lesssim \left |\Gamma\left (-\frac{\lambda}{2a}\right )\right |(a y^2)^{-1} (a y^2 +1)^{\frac{\lambda}{2a}}, 
e^{-\frac{a}{4} y^2}
\end{equation}
and
\begin{equation}  \label{phi1bnd}
|\phi_1| \lesssim\frac{1}{\left |\Gamma\left (1-\frac{\lambda}{2a}\right )\right |} (a y^2 +1)^{-\frac{\lambda}{2a}-1}
e^{\frac{a}{4} y^2},
\end{equation}
and having the  Wronskian
 \begin{equation}
W(\phi_0,\phi_1)=-\frac{8}{\lambda y^3}.
\label{eqn:wronskian1}
\end{equation}
Hence $\phi_0 \in \Xout $. Using variation of parameters and the Wronskian (\ref{eqn:wronskian1}),
 we find the general solution to $\Lopout\xi=f$ as $\xi=\Lopout^{-1}f+C_1\phi_0+C_2\phi_1$, where $C_1$ and $C_2$ are arbitrary constants and
\begin{equation*}
\Lopout^{-1} f:= -\frac{\lambda}{8}
(\phi_0\int_y^\infty \phi_1 f\, y^3 dy-\phi_1\int_y^\infty \phi_0 f\, y^3 dy).
\end{equation*}
\begin{lemma}
\label{prop:outer}
Say that $\frac{\lambda}{\epsout^2}|\ln (\epsout^2)|\ll 1$. 
If the parameter $a$ is small enough, the equation \eqref{eqn:EigenValueProblem} has a unique, modulo and overall constant factor, solution on $[\Rout,\infty)$ of the form $\phi_0+\xi$, with $\xi\in \Xout$ and
\begin{equation} \label{oseries}
 \xi=\sum_{n=1}^\infty (-\Lopout^{-1} \Vout)^n\phi_0.
\end{equation}
The series converges absolutely in $\Xout$ and
\begin{equation} \label{kernelbnd}
\| \Lopout^{-1} \Vout \|_{\Xout\rightarrow\Xout}\lesssim \frac{1}{\Rout^2}|\ln (a\Rout^2)|. 
\end{equation}
\end{lemma}
\begin{proof}
Substituting $\phi_\lambda^{out}=\phi_0+\xi$ into $(\Lopout+\Vout)\phi=0$ and using that
$\Lopout\phi_0=0$, we obtain $\Lopout \xi+\Vout \xi=-V_o \phi_o$. Now, using the form of the general solution to $\Lopout\xi=f$ found above and that $\phi_1\not\in\Xout$ gives that $(I+\Lopout^{-1} \Vout)\xi=-\Lopout^{-1} \Vout\phi_0$.  We note that the choice of the constants $C_1=0$ and $C_2=0$ follows from similar arguments as in Proposition \ref{prop:inner}).  The operator on the left hand side can be inverted using the Neumann series if $\|\Lopout^{-1} \Vout\|_{\Xout\rightarrow\Xout}<1$, and hence we estimate $\Lout{\Lopout^{-1}\Vout f}$ for $f\in\Xout$:
\begin{equation}
 \Lout{\Lopout^{-1}\Vout f}\le
\frac{\lambda}{8}\sup_{y\ge \Rout}  \left\{|\rho\phi_0|\int_y^\infty \left| \frac{\phi_1 \Vout}{\rho} \right|\, z^3 dz+ |\rho\phi_1|\int_y^\infty \left| \frac{\phi_0 \Vout}{\rho} \right|\, z^3 dz\right\}\Lout{f},
\label{est:Lout}
\end{equation}
where $\rho (y):=a y^2 (a y^2 +1)^{-\frac{\lambda}{2a}}e^{\frac{a}{4} y^2} 
$.  For $y\ge \Rout$ and $\Rout$ large, 
we have, using \eqref{Wa} and  $a^\frac{1}{2}\Rout\ll 1$, that
\begin{equation*}
 |\Vout(y)|\lesssim \lb a+\frac{1}{\Rout^2} \rb\frac{1}{y^2}\lesssim\frac{1}{\Rout^2}\frac{1}{y^2}.
\end{equation*}
Using this estimate and \eqref{phi0bnd} and \eqref{phi1bnd}  in \eqref{est:Lout} yields
\begin{equation}
\Lout{\Lopout^{-1}\Vout f} \lesssim
\frac{\lambda}{\Rout^2}\sup_{y\ge \Rout}  \left\{\int_y^\infty  \rho_0\rho_1 \, z dz+ \rho_0^{-1}\rho_1e^{\frac{a}{2} y^2}\int_y^\infty \rho_{0}^2\,e^{-\frac{a}{2} z^2} z dz\right\}\Lout{f},
\label{est:Gout}
\end{equation}
where $\rho_0=\rho_0(\frac{ay^2}{2} )$ and $\rho_1=\rho_1(\frac{ay^2}{2} )$ are the prefactors on the r.h.s. of \eqref{phi0bnd} and \eqref{phi1bnd}, respectively. Changing the variables of integration as $t=\frac{ay^2}{2}$ and using that $a\Rout^2=\epsout^2 \ll 1$, we obtain \eqref{kernelbnd}.

%
%
%
%
%
 %
Hence, for $a$ is small enough so that the r.h.s. of \eqref{kernelbnd} $<1$, we have that $\| \Lopout^{-1}\Vout \|_{\Xout\rightarrow\Xout}<1$,  and therefore the series \eqref{oseries} converges absolutely, 
which completes the proof.
\end{proof}

Lemma \ref{prop:outer} shows that in $[\Rout,\infty)$, $\phi_0$ solves \eqref{eqn:EigenValueProblem} in leading order in $a$:
\begin{equation}
 \phi_\lambda^{out}=\phi_0+ \OX{\Xout}{\frac{1}{\Rout^2}|\ln (a\Rout^2)|}. 
\label{philambdalo}
\end{equation}
Next, we show in Appendix that $\phi_0$ has the following asymptotics
\begin{equation} \label{exp:phi0lo}
 \phi_0=\ln(a y^2)-\ln 2-1+2\gamma+\Psi\lb 1-\frac{\lambda}{2a} \rb-\frac{4}{\lambda}\frac{1}{y^2}+\frac{a}{\lambda}+\O{a y^2\ln(a y^2)},
\end{equation}
where, recall, $\Psi$ is the digamma function and $\gamma =-\Psi(1)=0.577216\ldots$ is the Euler-Mascheroni constant.

Expressions \eqref{philambdalo} and \eqref{exp:phi0lo} and the observation that $|f| \le (a y^2)^{-1} (a y^2 +1)^{\frac{\lambda}{2a}}e^{-\frac{a}{4} y^2}\|f\|_o$ give \eqref{eqn:phiouter}-\eqref{eqn:remouter}.
\end{proof}

%

\section{Matching and Eigenvalues} \label{matching}
In this section we prove the main result stated in the introduction.  We have two expressions, the inner and outer solutions, \eqref{phiinner} and \eqref{eqn:phiouter}, which solve ${\cal L}\phi_\lambda=\lambda\phi_\lambda$ in the common region $[R_0,R_i]$.
 The inner and outer solutions, \eqref{phiinner} and \eqref{eqn:phiouter}, should be equal, up to a constant multiple,  in the common region $[R_0,R_i]$.  Hence we require that
\begin{multline}
-\frac{1}{4}\lambda\ln y^2+ \frac{1}{y^2}+\frac{1}{4}(2\lambda+\mu)+{\cal R}_i \\ 
 =C\lb \ln y^2-\frac{4}{\lambda}\frac{1}{y^2}-\ln\frac{1}{a}-\ln 2-1+2\gamma+\frac{a}{\lambda}+\Psi\lb 1-\frac{\lambda}{2a}\rb +{\cal R}_o\rb
\label{eqn:prematching}
\end{multline}
for $y\in[R_o,R_i]$. Here ${\cal R}_i$ and ${\cal R}_o$ given in \eqref{reminner} and \eqref{eqn:remouter}. 

Note that the remainders in \eqref{phiinner} and \eqref{eqn:phiouter} are much smaller than the corresponding leading terms, i. e. $| {\cal R}_i|\ll a$ and  $| {\cal R}_o|\ll 1$, if
\begin{equation} \label{condio}
y \gg \ln\frac{1}{a}\sqrt{\ln R_o}\varepsilon_i R_i\ \textrm{and}\ ay^2 \ll 1,
\end{equation}
respectively. 
 We assume that $a\rightarrow 0$ and that \eqref{condio}  holds. 
Equating the leading terms in Eqn \eqref{eqn:prematching}, i.e. the terms which multiples of  $\ln y^2$, gives  the equation 
\begin{align*}
 C=-\frac{\lambda}{4}+R_C,
\end{align*}
with $|R_C| \lesssim \inf \frac{a|{\cal R}_o| +|{\cal R}_i |}{\ln y} $, where $\inf$ is taken over $R_o \le y \le R_i$ satisfying  the condition \eqref{condio}, and  
 therefore   $R_C =\O{a^{3/2}} $. (Here and below we take $\epsin\sim \epsout\sim a^{1/4}$.) Similarly, equating the constant terms in \eqref{eqn:prematching} and substituting the above expression for $C$ gives that
\begin{equation*}
2\lambda+\mu=\lambda\lb \ln \frac{1}{a}+\ln2+1-2\gamma-\frac{a}{\lambda} -\Psi\lb 1-\frac{\lambda}{2a} \rb\rb+R,
\end{equation*}
where the higher order term $R$ is bounded as $|R| \lesssim \inf (a|{\cal R}_o| +|{\cal R}_i |) $ and therefore satisfies
\begin{equation*}
R=\O{a^{3/2}\ln\lb\frac{1}{a}\rb}.
\end{equation*}
Rearranging the above equation, assuming that $\mu+a\ne 0$, gives that
\begin{equation*}
1=\frac{\lambda}{\mu+a}\lsb \ln\frac{1}{a}-\Psi\lb 1-\frac{\lambda}{2a} \rb+K \rsb+\frac{R}{\mu+a},
\end{equation*}
where, recall, $K:=\ln 2-1-2\gamma$. This is the equation \eqref{eqn:eigenvalue}. 
This proves Theorem \ref{thm:main}. \qed
\section{Solution of (\ref{eveqn})} \label{solving}
\begin{prop}
\label{lemma:eigenvalues}
  The set of solutions to \eqref{eveqn} 
   as $a\rightarrow 0$ is $\{\lambda_n\}_{n=0}^\infty$, where $\lambda_n$ is given by \eqref{eqn:explicit_eigvalue_estimates}.
\end{prop}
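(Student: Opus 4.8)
The plan is to solve the transcendental equation \eqref{eveqn} asymptotically as $a\to 0$ by a two-region analysis in the variable $\lambda/a$, exactly mirroring the structure of the digamma term $\Psi(1-\lambda/(2a))$. Rewrite \eqref{eveqn} in the form
\begin{equation*}
\ln\frac{1}{a}-\Psi\!\left(1-\frac{\lambda}{2a}\right)+K=\frac{\mu+a}{\lambda},
\end{equation*}
and observe that the left-hand side is large (order $\ln\frac1a$) whenever $\lambda/(2a)$ stays away from the poles of $\Psi$, i.e. away from the positive integers. Hence either $\lambda$ is small, of order $a/\ln\frac1a$ (this will give $\lambda_0$), or $\lambda$ is near $2na$ for some integer $n\ge 1$ (giving $\lambda_n$), since near those points $-\Psi(1-\lambda/(2a))$ blows up and can balance the bounded right-hand side $(\mu+a)/\lambda\sim (\mu+a)/(2na)$, which is $O(1)$.

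First I would treat $n=0$. Here $\lambda$ is so small that $1-\lambda/(2a)$ is near $1$, so $\Psi(1-\lambda/(2a))=\Psi(1)+O(\lambda/a)=-\gamma+O(\lambda/a)$. Substituting and solving gives $\lambda_0=\dfrac{\mu+a}{\ln\frac1a+K+\gamma}\bigl(1+o(1)\bigr)$; one more iteration, feeding this back into the $O(\lambda/a)$ correction inside $\Psi$ (which contributes a term of relative size $1/\ln\frac1a$ inside the bracket, hence an absolute correction of order $a\ln^{-3}\frac1a$ to $\lambda_0$), yields the stated remainder $\O{a\ln^{-3}\frac1a}$. I would use the expansion $\Psi(1+x)=-\gamma+\sum_{k\ge 1}(-1)^{k+1}\zeta(k+1)x^k$ valid for $|x|<1$ to make the error bookkeeping precise.

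Next, for $n\ge 1$, set $\lambda=2na+\delta$ with $\delta$ small compared to $a$, so $1-\lambda/(2a)=1-n-\delta/(2a)=-(n-1)-\delta/(2a)$, i.e. the argument of $\Psi$ is near the pole at $-(n-1)$. Using the reflection/recurrence identity $\Psi(1-z)=\Psi(z)+\pi\cot\pi z$, or more directly the Laurent expansion of $\Psi$ at its poles, $\Psi(-m+\varepsilon)=-\frac1\varepsilon+H_m-\gamma+O(\varepsilon)$ for $m\ge 0$ a nonnegative integer (with $H_0=0$), one gets $-\Psi(1-\lambda/(2a))=-\Psi(-(n-1)-\delta/(2a))=\dfrac{2a}{\delta}\cdot\dfrac{1}{1+\ldots}+\gamma-H_{n-1}+O(\delta/a)$. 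Plugging into the rewritten equation and solving for $\delta$ gives
\begin{equation*}
\delta=\frac{2a}{\ln\frac1a+K+\gamma-H_{n-1}-\frac{\mu+a}{2an}}+\O{a\ln^{-3}\tfrac1a},
\end{equation*}
after replacing $(\mu+a)/\lambda=(\mu+a)/(2na+\delta)$ by $(\mu+a)/(2an)$ up to an error absorbed in the remainder; this is exactly \eqref{eqn:explicit_eigvalue_estimates}.

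Finally I would argue completeness and exclusivity of the list: show that \eqref{eveqn} has exactly one solution in a neighborhood of each $2na$ ($n\ge1$) and exactly one near $0$, and none elsewhere in $[0,Ca]$. This follows from monotonicity: on each interval between consecutive poles of $\Psi$, the function $\lambda\mapsto \frac{\lambda}{\mu+a}[\ln\frac1a-\Psi(1-\lambda/(2a))+K]$ is (for $a$ small) strictly monotone and sweeps from $+\infty$ (or the appropriate one-sided limit) through $1$ exactly once in the relevant sub-interval, while being bounded away from $1$ on the complementary part of $[0,Ca]$ because there the bracket is of size $\ln\frac1a$ and $\lambda/(\mu+a)$ is bounded below away from $1/\ln\frac1a$ except near the poles. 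The number of integers $n$ with $2na\le Ca$ is $\lfloor C/2\rfloor$, consistent with finitely many eigenvalues in the window, and letting the window grow recovers the full sequence $\{\lambda_n\}_{n\ge0}$.

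I expect the main obstacle to be the error analysis in the $n\ge1$ case: near a pole of $\Psi$ the quantity $\delta/a$ must be shown to be genuinely of size $1/\ln\frac1a$ and not larger, so that the Laurent expansion of $\Psi$ is being used in its domain of validity and the neglected $O(\delta/a)$ term is truly $O(\ln^{-1}\frac1a)$ relative to the main balance; this requires a careful a priori trapping of the root (a fixed-point/contraction argument on a shrinking interval around $2na$) before the expansion can be justified. The $n=0$ case and the monotonicity/counting argument are comparatively routine once this trapping is in hand.
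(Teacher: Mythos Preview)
Your approach is essentially the same as the paper's: split into the case $\lambda\ll a$ (giving $\lambda_0$) versus $\lambda$ near a pole of $\Psi(1-\lambda/(2a))$ (giving $\lambda_n$, $n\ge1$), expand $\Psi$ accordingly, and iterate once to get the stated remainder. The paper parametrizes the second case as $\lambda/(2a)=1+n+\delta$ and uses the recurrence identity $\Psi(-n-\delta)=\Psi(1-\delta)+\sum_{k=0}^n\frac{1}{k+\delta}$ to extract the $1/\delta$ singularity, whereas you use the Laurent expansion $\Psi(-m+\varepsilon)=-1/\varepsilon+H_m-\gamma+O(\varepsilon)$ directly; these are equivalent.

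One slip to fix: with your parametrization $\lambda=2na+\delta$ you have $\Psi(1-\lambda/(2a))=\Psi(-(n-1)-\delta/(2a))\sim \frac{2a}{\delta}+H_{n-1}-\gamma$, so $-\Psi(1-\lambda/(2a))=-\frac{2a}{\delta}+\gamma-H_{n-1}+O(\delta/a)$, not $+\frac{2a}{\delta}$ as you wrote. Your final expression for $\delta$ is nonetheless correct, so this is just a sign typo in the intermediate display. Your monotonicity/counting argument for exclusivity is a welcome addition that the paper leaves implicit; the paper simply asserts the dichotomy ``either $\lambda/(\mu+a)\ll1$ or there is cancellation'' without justifying that no other solutions exist.
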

\begin{proof}
 The term $\ln\frac{1}{a}$ on the left hand side of \eqref{eveqn} is unbounded as $a\rightarrow 0$ whereas the right hand side is bounded.  Thus, there are two possibilities: either $\lambda/(\mu+a)\ll 1$ as $a\rightarrow 0$ or $|\lambda/(a+\mu)|\ge C>0$ and there is cancelation between $\ln\frac{1}{a}$ and $\Psi(1-\lambda/2a)$.

We begin with the first case.  If $\lambda/(\mu+a)\ll 1$ as $a\rightarrow 0$, then, since $\mu\lesssim a$, $\lambda/2a\sim \lambda/(\mu+a)$ as $a\rightarrow 0$.   We use this and the fact that $\Psi(1)=-\gamma$ and $\Psi(1+\delta)=\Psi(1)+\O{\delta}$ to write \eqref{eveqn} as
\begin{equation}
\frac{\lambda}{\mu+a}\lsb\ln\frac{1}{a}+K+\gamma +\O{\frac{\lambda \sqrt{a}}{\mu+a}}\rsb=1.
\label{lambda0est1}
\end{equation}
This equation immediately gives the rough estimate that $\lambda/(\mu+a)=\O{\ln^{-1}\frac{1}{a}}$.  Substituting this estimate into the $\O{\cdot}$ term in \eqref{lambda0est1}, then solving the resulting equation for $\lambda$ gives that
\begin{equation*}
 \lambda=\frac{\mu+a}{\ln\frac{1}{a}+K+\gamma}\frac{1}{1+\O{\ln^{-2}\frac{1}{a}}}.
\end{equation*}
Further simplification of the right hand side gives the $n=0$ expression in  \eqref{eqn:explicit_eigvalue_estimates}.

If $|\lambda/(\mu+a)|\ge C>0$, then there must be cancelation between $\ln\frac{1}{a}$ and $\Psi\lb 1-\frac{\lambda}{2a}\rb$ in \eqref{eveqn} as already mentioned above.  The digamma function $\Psi(x)$ has poles at $x=-n$ for integers $n\ge 0$ and hence, for cancelation to occur, $\lambda/2a$ must have the form
\begin{equation}
 \frac{\lambda}{2a}=1+n+\delta,
\label{eqn:lambda2a}
\end{equation}
where $\delta\ll 1$ as $a\rightarrow 0$.  Substituting this form of $\lambda/2a$ into \eqref{eqn:eigenvalue} gives the equation
\begin{equation}
 (1+n+\delta)\lb \ln\frac{1}{a}-\Psi(-n-\delta)+K \rb=\frac{\mu+a}{2a}.
\label{eveqnnge1}
\end{equation}
We extract the singular behaviour of $\Psi(-n-\delta)$ using the identity
\begin{equation}
 \Psi(-n-\epsilon)=\Psi(1-\delta)+\sum_{k=0}^{n}\frac{1}{k+\delta}.
\label{PsiIdentity}
\end{equation}
If $k\ge 1$ and $\delta<1/2$, then $1/(k+\delta)\lesssim 1/k+\delta/k^2$.  Using this bound and the fact that $\Psi(1-\delta)=-\gamma+\O{\delta}$ in \eqref{PsiIdentity} yields that
\begin{equation*}
 \Psi(-n-\delta)=\frac{1}{\delta}-\gamma+\sum_{k=1}^n\frac{1}{k}+\O{\delta}.
\end{equation*}
Substituting the right hand side for $\Psi(-n+\delta)$ in \eqref{eveqnnge1} gives the equation
\begin{equation*}
 (1+n+\delta)\lb \ln\frac{1}{a}-\frac{1}{\delta}+K+\gamma-\sum_{k=1}^n\frac{1}{k}+\O{\delta} \rb=\frac{\mu+a}{2a}.
\end{equation*}
As before, a rough estimate of $\delta$ is $\ln^{-1}\frac{1}{a}$.  Using this to invert $(1+n+\delta)$ gives that
\begin{equation*}
 \ln\frac{1}{a}-\frac{1}{\delta}+K+\gamma-\sum_{k=1}^n\frac{1}{k}+\O{\ln^{-1}\frac{1}{a}}
 =\frac{\mu+a}{2a(1+n)}\frac{1}{1+\O{\ln^{-1}\frac{1}{a}}}
\end{equation*}
and hence, solving this equation for $\delta$, we find that
\begin{equation*}
 \delta=\frac{1}{\ln\frac{1}{a}+K+\gamma-\sum_{k=1}^n\frac{1}{k}-\frac{\mu+a}{2a(n+1)}}
 +\O{\ln^{-3}\frac{1}{a}}.
\end{equation*}
Substituting this expression into \eqref{eqn:lambda2a} gives the eigenvalue approximation \eqref{eqn:explicit_eigvalue_estimates} for $n\ge 1$ (with $n$ replaced by $n-1$ above) and completes the proof.
\end{proof}

\section{Numerical Calculation of Spectrum} \label{numerics}

To determine eigenvalues and eigenfunctions numerically we used a version of shooting method: we numerically solved the eigenvalue equation \eqref{eveqn} with initial conditions $\phi_\lambda(y=0)=1$ and $\frac{d}{dy}\phi_\lambda(y=0)=0$ for each value of $\lambda$. For general $\lambda$, solution at $ay^2\gg 1$ is a linear combination (\ref{generalconfluenthypergeometric}), which grows exponentially as given by  (\ref{seriesinftyhypergemetric}). We used Newton's method to find values $\lambda$ for which $c_1=0$ (i.e. removing exponentially growing terms at infinity) in (\ref{generalconfluenthypergeometric}).
Stopping criterion for Newton's method was to have both   $\phi_\lambda(y)$ and $\frac{d}{dy}\phi_\lambda(y)$ to decay exponentially for large $y$. Note that vanishing of $\phi_\lambda(y)$ for $y\to \infty$ is not sufficient because it would not exclude spurious solution when $c_1\phi_0(y)+c_2\phi_1(y)=0$ at one point only. We controlled numerical precision by the matching of numerical solution to the asymptotics  (\ref{seriesinftyhypergemetric}).

  Figure 1 shows  the first three eigenvalues as functions of $a$ obtained in three different ways. The solid lines are numerically computed from shooting method, the dashed lines are obtained using the expressions in \eqref{eqn:explicit_eigvalue_estimates}, and the dash-dot lines (almost visually indistinguishable from solid lines) are obtained by \eqref{eveqn}.
 It is seen that accuracy of numerical solutions compare with approximate analytical results is very high.

\section{Conclusions}\label{sec:concl}

We summarize main results of this paper:
\begin{itemize}
\item We found low-lying spectrum for a class of operators which appear in the linearization of the simplified critical Keller - Segel system around the one-parameter family of stationary solutions. These operators have one negative and one near zero eigenvalue and as a result - as discussed in the introduction - the blowup asymptotics will be governed by a two-parameter deformation of the static solution.  The construction of such deformations and ensuing results are outlined in the introduction.

\item 
We constructed a rigorous and robust version of 
the method of matched asymptotics. We believe it 
can be used a large variety of linear differential operators arising in the linearization of nonlinear equations and hopefully can be extended to nonlinear ones as well (in this case perturbation series, \eqref{inNeumann} and \eqref{oseries}, should be replaced by fixed point iterations).
\end{itemize}

There are two main limitations of our set-up: we deal with radially symmetric solutions and with adiabatic approximation ignoring evolution of chemical concentration. Hence we conclude by emphasizing the desirability of two further extensions of our analysis by considering
\begin{itemize}
\item non-radially symmetric initial conditions;
\item the full  Keller - Segel model (without the adiabatic approximation).
\end{itemize}

\section{Acknowledgments}

The authors are grateful to the anonymous referee and M. Brenner for useful remarks.

Work of P.L. was supported by NSF grants DMS 0719895,
DMS 0807131 and UNM RAC grant. Research of Yu. Ov. is supported under the grant 
RFBR-07-0-2-12058 and research of  I.M.S. is supported by NSERC under Grant NA 7901.

A part of this work was done during I.M.S.'s stay at the IAS, Princeton, and ESI, Vienna.


\appendix

\section{Appendix: Solutions of ${\cal L}_0\phi=0$} 
\label{App:eqnL0phi=0}

 In this appendix we derive, using well-known properties of  the  confluent hypergeometric functions, some properties of solutions of  equation $\Lopout\phi=0$ which were used in Section \ref{sec:outer}.
\begin{prop} \label{prop:phiibnd}
If $\lambda/2a\ne$positive integer, then there are two independent solutions, $\phi_0$ and $\phi_1$,  of the equation ${\cal L}_0\phi=0$, satisfying  the bounds \eqref{phi0bnd} and \eqref{phi1bnd},
and having the Wronskian \eqref{eqn:wronskian1} and the expansion \eqref{exp:phi0lo}.
\end{prop}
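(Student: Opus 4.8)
The plan is to reduce the equation $\Lopout\phi=0$, i.e. $-\phi''-\tfrac{3}{y}\phi'+\tfrac14 a^2y^2\phi-\lambda\phi=0$, to a confluent hypergeometric (Kummer) equation by a change of variables. First I would introduce the natural similarity variable $t:=\tfrac{a}{2}y^2$, so that $y^3dy$ becomes (a multiple of) $t\,dt$ and the radial $D=4$ operator $-\Lap{4}=-\tfrac{d^2}{dy^2}-\tfrac3y\tfrac{d}{dy}$ turns into a first-order-in-$t$ operator; one checks that in the variable $t$ the equation reads $t\phi_{tt}+2\phi_t+\bigl(\tfrac{\lambda}{2a}-\tfrac{t}{4}\cdot\tfrac{?}{}\bigr)\phi=\dots$ — after also performing the gauge substitution $\phi=e^{-t/2}\psi$ (to strip the $e^{\pm a y^2/4}$ asymptotics, which is exactly the harmonic-oscillator Gaussian), the equation for $\psi$ becomes Kummer's equation $t\psi_{tt}+(b-t)\psi_t-\alpha\psi=0$ with $b=2$ and $\alpha=1-\tfrac{\lambda}{2a}$. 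This identification is the crux; everything else is bookkeeping with standard special-function facts.

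With the equation in Kummer form, I would take the two solutions to be built from $M(\alpha,b,t)$ (the regular Kummer function ${}_1F_1$) and $U(\alpha,b,t)$ (the Tricomi function): set $\phi_1:=e^{-t/2}M(\alpha,2,t)$ and $\phi_0:=e^{-t/2}U(\alpha,2,t)$, up to normalizing constants chosen to produce the stated $\Gamma$-factor prefactors in \eqref{phi0bnd}–\eqref{phi1bnd} and the Wronskian \eqref{eqn:wronskian1}. The bounds then follow from the classical asymptotics: $M(\alpha,2,t)\sim \tfrac{\Gamma(2)}{\Gamma(\alpha)}e^{t}t^{\alpha-2}$ as $t\to\infty$ gives $\phi_1$ its $e^{+ay^2/4}(ay^2+1)^{-\lambda/2a-1}/|\Gamma(1-\lambda/2a)|$ behavior, while $U(\alpha,2,t)\sim t^{-\alpha}$ as $t\to\infty$ gives $\phi_0$ its decaying $e^{-ay^2/4}(ay^2)^{-1}(ay^2+1)^{\lambda/2a}$ bound; near $t=0$, $U(\alpha,2,t)\sim \tfrac{1}{\Gamma(\alpha)}t^{-1}$ supplies the $\tfrac{-4}{\lambda}\tfrac1{y^2}$ singular term. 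The Wronskian \eqref{eqn:wronskian1} is computed from the known Wronskian $W\{M,U\}(t)=-\tfrac{\Gamma(b)}{\Gamma(\alpha)}t^{-b}e^{t}$ together with the chain-rule Jacobian $\tfrac{dt}{dy}=ay$ and the $y^3$ density, and by fixing the normalization of $\phi_0$.

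For the expansion \eqref{exp:phi0lo} I would use the logarithmic small-$t$ expansion of $U(\alpha,b,t)$ when $b=2$ is a positive integer: the standard formula expresses $U(\alpha,2,t)$ as a combination of $t^{-1}$, a term proportional to $M(\alpha,2,t)\ln t$, and an analytic series whose constant term involves $\Psi(\alpha)=\Psi(1-\tfrac{\lambda}{2a})$ and $\Psi(1)=-\gamma$ and $\Psi(2)$. Collecting these, rewriting $\ln t=\ln(ay^2)-\ln 2$, and normalizing $\phi_0$ so that the coefficient of $\ln(ay^2)$ is $1$, produces precisely $\phi_0=\ln(ay^2)-\ln2-1+2\gamma+\Psi(1-\tfrac{\lambda}{2a})-\tfrac4\lambda\tfrac1{y^2}+\tfrac a\lambda+\O{ay^2\ln(ay^2)}$; the $\tfrac a\lambda$ term comes from the $O(t)$ correction in the analytic series evaluated at $t=\tfrac a2 y^2$. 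The remainder estimate $\O{ay^2\ln(ay^2)}$ is just the first omitted term, uniform on the relevant range because $ay^2\ll1$ there.

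The main obstacle I anticipate is purely technical rather than conceptual: pinning down the \emph{exact} constants — the $\Gamma$-prefactors, the $-8/(\lambda y^3)$ in the Wronskian, and especially the additive constant $-\ln2-1+2\gamma$ and the coefficient of $\tfrac a\lambda$ in \eqref{exp:phi0lo} — requires careful tracking of normalizations through the change of variables $t=\tfrac a2 y^2$, the gauge factor $e^{-t/2}$, and the connection formulas relating $U$ to $M$ and $M(\alpha,b,\cdot)$ at integer $b$. A secondary subtlety is that the expansion of $U(\alpha,2,t)$ near $t=0$ genuinely contains the $M\ln t$ term, so one must verify the non-logarithmic part is what is claimed and that the $\ln y^2$ coefficient is correctly normalized to $1$; since the hypothesis excludes $\lambda/2a$ being a positive integer, $\Gamma(\alpha)=\Gamma(1-\tfrac{\lambda}{2a})$ is finite and nonzero, so all these formulas are valid without degeneration. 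I would relegate the full constant-chasing to the body of the appendix (as the paper does) and present here only the identification with Kummer's equation and the list of classical facts invoked.
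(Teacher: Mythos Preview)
Your proposal is correct and follows essentially the same route as the paper: the substitution $z=\tfrac{a}{2}y^2$, $\phi=e^{-z/2}\chi$ reducing $\Lopout\phi=0$ to Kummer's equation with $b=2$ and $\alpha=1-\tfrac{\lambda}{2a}$, identification of $\phi_0,\phi_1$ with $U$ and ${}_1F_1$ (the paper multiplies $\phi_0$ by $\Gamma(-\tfrac{\lambda}{2a})$ for convenience), and then the standard large- and small-argument expansions of $U$ and $M$ from Abramowitz--Stegun to read off the bounds, the Wronskian, and the logarithmic expansion \eqref{exp:phi0lo}. The only cosmetic difference is that the paper computes the Wronskian by solving $W'=-\tfrac{3}{y}W$ and fixing the constant from the large-$y$ asymptotics rather than transporting the Kummer Wronskian formula through the change of variables as you suggest; either works.
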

\begin{proof}
The equation $\Lopout\phi=0$ is the eigenvalue equation for the spherically symmetric harmonic oscillator in $D=4$:
\begin{equation}\label{Lharmonic}
\Lopout\phi =\left [-\frac{1}{y^3}\partial_y y^3\partial_y
+\frac{a^2}{4}y^2-\lambda \right ] \phi=0.
\end{equation}
Changing in this equation  both dependent and independent variables,
\begin{equation} \label{changevarhypergeometric}
\phi(y)=e^{-ay^2/4}\chi(z),  \quad z=\frac{ay^2}{2},
\end{equation}
we obtain the Kummer's (or a confluent hypergeometric) differential equation \cite{AbramowitzStegun}
\begin{equation} \label{confluenthypergeometric}
z\chi''+(2-z)\chi'+\lb\frac{\lambda}{2a}-1\rb\chi=0.
\end{equation}
Assuming that  $\lambda/2a\ne$positive integer, the latter equation has two linearly independent solutions 
\begin{equation} \label{generalconfluenthypergeometric}
\chi_0= U(1-\frac{\lambda}{2a},2,z),\ \chi_1=_1F_1(1-\frac{\lambda}{2a},2,z),
\end{equation}
where $U(a,b,z)$ and  $_1F_1(a,b,z)$ are the  confluent hypergeometric functions of the second kind and the first kind, respectively \cite{AbramowitzStegun}. They are
given by
 the following expressions (see the equations (13.1.2), (13.1.6) in \cite{AbramowitzStegun}  and the equation (13) of the section 6.7.1 of \cite{BatemanErdelyi1953}):
\begin{equation} \label{phi01seriesfull}
\begin{split}
_1F_1(a,b,z)&=\sum\limits^{\infty}_{r=0}\frac{(a)_r z^r}{(b)_r r!},\\
 U(a,n+1,z)&=\frac{(-1)^{n+1}}{n!\Gamma(a-n)}\Big  [ \, _1F_1(a,n+1,z)\ln{z} \\
&+\sum\limits_{r=0}^{\infty} \frac{(a)_r z^r}{(n+1)_r r!}\{ \Psi(a+r)-\Psi(1+r)-\Psi(1+n+r)   \} \Big ]\\
&+\frac{(n-1)!}{\Gamma(a)}\sum\limits_{r=0}^{n-1}\frac{(a-n)_r}{(1-n)_r} \frac{z^{r-n}}{r!} ,
\end{split}
\end{equation}
where $(a)_j=a(a+1)(a+2)\ldots (a+j-1), \ (a)_0=1$, $n=0, \ 1, \ 2, \ldots$,  and the principal branch of $\ln z$ is assumed to be chosen by setting  $-\pi < \arg{z} \le \pi.$

The equations (\ref{changevarhypergeometric}) and (\ref{generalconfluenthypergeometric}) give two linearly independent solutions $\phi_0$ and $\phi_1$ of $\Lopout\phi=0$  on $[0,\infty)$:
\begin{equation} \label{phi01sol}
\begin{split}
\phi_0 (y)&=\Gamma(-\frac{\lambda}{2a})U(-\frac{\lambda}{2a}+1,2,\frac{ay^2}{2})e^{-ay^2/4}, \\
\phi_1 (y)&=_1F_1(-\frac{\lambda}{2a}+1,2,\frac{ay^2}{2})e^{-ay^2/4},
\end{split}
\end{equation}
where, using that $\phi_0$ and $\phi_1$  are defined up to arbitrary constants, we  added,  for convenience, the factor  $\Gamma(-\frac{\lambda}{2a})$. (Without that factor the equation  (\ref{exp:phi0lo})   would have a factor $1/\Gamma(-\frac{\lambda}{2a})$ in all terms.)

Asymptotic expansions of both confluent hypergeometric functions in (\ref{generalconfluenthypergeometric}) for $a y^2\to \infty$ are given by (see the equations (13.1.4) and (13.1.8) of Ref. \cite{AbramowitzStegun})
\begin{equation} \label{seriesinftyhypergemetric}
\begin{split}
\phi_0 (y)&=\Gamma(-\frac{\lambda}{2a})\left (\frac{ay^2}{2}\right )^{\frac{\lambda}{2a}-1}e^{-ay^2/4}\left [1+O\left(\frac{1}{ay^2}  \right )   \right ],\\
\phi_1 (y)&=\frac{1}{\Gamma(1-\frac{\lambda}{2a})}\left (\frac{ay^2}{2}\right )^{-\frac{\lambda}{2a}-1}e^{ay^2/4}\left [1+O\left(\frac{1} {ay^2} \right ) \right ].
\end{split}
\end{equation}
It follows from equations (\ref{seriesinftyhypergemetric}) that out of these two functions only  $\phi_0$ decays at infinity (with correct asymptotic $\phi_0 \propto y^{\frac{\lambda}{a}-2}e^{-\frac{a}{4} y^2}$).

The bounds \eqref{phi0bnd} and \eqref{phi1bnd} are proven similarly, so we prove only  \eqref{phi0bnd}. Recall the definition of $\chi_0$ in \eqref{generalconfluenthypergeometric}.  \eqref{phi01seriesfull} implies that for a constant $C_1$, which might depend only on $\lambda/2a$, there is a point $z_0 \ge 1$ s.t. $|\chi_0 (z)| \le C_1 z^{\frac{\lambda}{2a}-1} $ for all $z \ge z_0$. Since $\chi_0$ depends only on $z$ and $\lambda/2a$, there is a  constant $C_2$, which might depend only on $\lambda/2a$, s.t. $|\chi_0 (z)| \le C_2 z^{\frac{\lambda}{2a}-1}$ for all $z \ge 1$.

Now, \eqref{phi01seriesfull} implies that $|\chi_0 (z)| \le C_3/z$ for all $0 \le z \le 1$ for some absolute constant $C_3$. Combining the above inequalities gives the bound \eqref{phi0bnd}.

Now we compute the Wronskian, $W(\phi_0,\phi_1):=\phi_0\p_y\phi_1-\p_y\phi_0\phi_1$, of the two solutions found above. As usual, using the equation ${\cal L}_0\phi=0$, we derive the first order equation, $W' = -\frac{3}{y}W$, for $W(\phi_0,\phi_1)$.  Solving this equation, we obtain
$
W(\phi_0,\phi_1)=\frac{C}{y^3},
$
with a real constant $C$.  To find this constant we compute $W$ as $y \rightarrow \infty$, using the equations \eqref{seriesinftyhypergemetric} and the fact $\Gamma(z+1)=z\Gamma(z)$ (on can also find $C$ using the expansion (\ref{phi01seriesfull})). This gives $C=-\frac{8}{\lambda}$ and \eqref{eqn:wronskian1}.

Finally, we prove \eqref{exp:phi0lo} for $\phi_0$.  To this end we study the behaviour of the solution for $y\ll 1/a^\frac{1}{2}\ln^\frac{1}{2}\frac{1}{a}$, $a\ll 1$, or equivalently, $z\ll 1/\ln\frac{1}{a}$.  In the small $z$ region,
\begin{equation*}
 \chi_{\lambda}(z)=\ln \frac{z}{A}-\frac{2a}{\lambda z}+\O{z}
\end{equation*}
and $e^{-\frac{1}{2}z}=1-\frac{1}{2} z+\O{z^2}$.  Computing the product of the small $z$ expansions of $\chi_\lambda$ and $e^{-\frac{z}{2}}$, and replacing $z$ with $\frac{a y^2}{2}$ in the result gives the expression \eqref{exp:phi0lo}.

Now let $y\in (0,\Rin]$. Using the expansion  (\ref{phi01seriesfull})   for $n=1$ and (\ref{changevarhypergeometric}) we obtain    \eqref{exp:phi0lo}.
\end{proof}




\end{document}